\documentclass[runningheads,orivec]{llncs}

\usepackage{packages}


\newcommand{\mihaela}[1]{\todo[color=green!20,linecolor=black!60,size=\scriptsize]{M: #1}}

\makeatletter
 
\def\orcidID#1{\smash{\href{http://orcid.org/#1}{\protect\raisebox{-1.25pt}{\protect\includegraphics{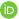}}}}}
\makeatother

\newcommand{\lst}[1]{{\lstinline{#1}}}


\newcommand{\resp}{resp.}
\newcommand{\mypar}[1]{\smallskip\noindent\emph{#1}}
\newcommand{\myfpar}[1]{\noindent\fbox{#1}}



\newcommand{\slah}{\textsf{SLAH}}
\newcommand{\PbA}{\textsf{PA}}
\newcommand{\EPbA}{\textsf{EPA}}
\newcommand{\qfpa}{\textsf{QFPA}}


\newcommand{\NN}{\mathbb{N}}  
\newcommand{\tfun}{\rightarrow}  
\newcommand{\pfun}{\rightharpoonup} 
\newcommand{\qf}{\mathtt{qf}}  
\newcommand{\fv}{\mathtt{fv}}  

\newcommand{\limp}{\rightarrow}

\newcommand{\emp}{\mathtt{emp}}
\newcommand{\pto}{\mapsto}
\newcommand{\sepc}{\ast}
\newcommand{\bsepc}{\scalebox{1.5}{\raisebox{-0.2ex}{$\ast$}}}

\newcommand{\blk}{\mathtt{blk}}

\newcommand{\hls}[1]{\mathtt{hls}^{#1}}

\newcommand{\ls}{\mathtt{ls}}

\newcommand{\abs}{\mathtt{Abs}}

\newcommand{\wpos}[1]{\mathtt{dom}(#1)}


\newcommand{\eub}{\mathtt{EUB}}

\newcommand{\cspen}{\textsc{CompSPEN}}
\newcommand{\cspenp}{\textsc{CompSPEN}$^+$}

\newcommand{\hide}[1]{ }

\newcommand \ltrue {{\tt true}}

\newcommand \isnonemp {{\tt isNonEmp}}

\newcommand \patoms {{\tt PAtoms}}
\newcommand \atoms {{\tt Atoms}}
\newcommand \atomhead {{\tt start}}
\newcommand \atomtail {{\tt end}}

\newcommand{\addr}{\mathcal{A}}

\newcommand{\cV}{\mathcal{V}}

\title{Deciding Separation Logic with Pointer Arithmetic and Inductive Definitions}
\author{%
Wanyun Su\inst{2} \and
Zhilin Wu\inst{2}\orcidID{0000-0003-0899-628X} \and
Mihaela Sighireanu\inst{1}\orcidID{0000-0002-1925-089X}
}
\institute{%
   State Key Laboratory of Computer Science, \\
   Institute of Software, Chinese Academy of Sciences, China \\
\and
   LMF, ENS Paris-Saclay, University Paris-Saclay and CNRS, France
}
\date{}
\authorrunning{W. Su, Z. Wu and M. Sighireanu}

\setlength{\parskip}{0eX}

\begin{document}


\maketitle
\sloppy



\begin{abstract}
Pointer arithmetic is widely used in low-level programs, e.g. memory allocators. 
%
The specification of such programs usually requires using pointer arithmetic inside inductive definitions to define the common data structures, e.g. heap lists in memory allocators. 
In this work, we investigate decision problems for SLAH, a separation logic fragment 
that allows pointer arithmetic inside inductive definitions, 
thus enabling specification of properties for programs manipulating heap lists.
%
Pointer arithmetic inside inductive definitions is challenging for automated reasoning.  
We tackle this challenge and achieve decision procedures for both satisfiability and entailment of SLAH formulas. 
The crux of our decision procedure for satisfiability is to compute summaries of inductive definitions. We show that although the summary is naturally expressed as an existentially quantified non-linear arithmetic formula, it can actually be transformed into an equivalent linear arithmetic formula. 
The decision procedure for entailment, on the other hand, has to match and split the spatial atoms according to the arithmetic relation between address variables. 
We report on the implementation of these decision procedures and their 
good performance in solving problems issued from the verification of 
building block programs used in memory allocators.
\hide{
We study the decidability of the verification problem for 
an extension of the array separation logic (ASL) 
allowing the specification of data structures like 
pointers, memory blocks or heap-lists, 
which are lists built inside memory blocks. 
The logic, called \slah, adds to ASL 
inductively defined predicates specifying singly linked heap-lists. 
We show that the logic is suitable for the compositional verification 
of programs manipulating heap-lists, e.g., memory allocators. 
We propose decision procedures for satisfiability and entailment of \slah, 
which are substantial extensions of those proposed for ASL 
because they have to deal with pointer arithmetic inside inductive definitions.
Our decision procedures propose some new ideas to tackle this challenge. 
The crux of our decision procedure for satisfiability is 
to compute a summary of the inductive definition of the heap-list predicate.
We show that although the summary is expressed as 
an existentially quantified non-linear arithmetic constraint, 
it can actually be transformed into an equivalent linear arithmetic formula. 
The decision procedure for entailment 
uses the decision procedure for satisfiability as an oracle 
and deals with the pointer arithmetic in the inductive definitions. 
We implemented the decision procedures and 
did experiments to evaluate their performance. 
The experimental results demonstrate the effectiveness of the solver
for deciding problems issued from the verification 
of heap-list manipulating programs.
}
\end{abstract}



\section{Introduction}

\mypar{Context.}
Separation Logic (SL, \cite{Reynolds:2002,OHearn19}), an extension of Hoare logic, is a well-established formalism for the verification of heap manipulating programs. SL features a \emph{separating conjunction operator} and \emph{inductive predicates}, which allow to express how data structures are laid out in memory in an abstract way. Since its introduction, various verification tools based on separation logic have been developed. A notable one among them is the INFER tool \cite{CD11}, which was acquired by Facebook in 2013 and has been actively used in its development process \cite{CDD+15}. 

Decision procedures for separation logic formulas are vital for the automation of the verification process.
These decision procedures mostly focused on separation logic fragments called \emph{symbolic heaps} (SH)~\cite{BerdineCO04}, since they provide a good compromise between expressivity and tractability. 
The SH fragments comprise existentially quantified formulas that are 
conjunctions of atoms encoding 
aliasing constraints $x = y$ and $x \neq y$ between pointers $x$ and $y$, 
points-to constraints $x\pto v$ expressing that at the address $x$ is stored the value $v$, 
and predicate atoms $P(\vec{x})$ defining unbounded memory regions of a particular structure. The points-to and predicate atoms, also called spatial atoms,
are composed using the separating conjunction to specify the disjointness of memory blocks they specify.

Let us briefly summarize the results on the SH fragments in the sequel.
For the SH fragment with the singly linked list-segment predicate, arguably the simplest SH fragment, its satisfiability and entailment problems have been shown to be in PTIME~\cite{CookHOPW11} and efficient solvers have been developed for it~\cite{SLCOMPsite}.
The situation changes for more complex inductive predicates: 
The satisfiability problem for the SH fragment with a class of general inductive predicates was shown to be EXPTIME-complete~\cite{DBLP:conf/csl/BrotherstonFPG14}. On the other hand, the entailment problem for the SH fragment with slightly less general inductive predicates was shown to be 2-EXPTIME-complete~\cite{KatelaanMZ19,DBLP:conf/lpar/EchenimIP20}. 

\mypar{Motivation.}
Vast majority of the work on the verification of heap manipulating programs based on SL assumes that the addresses are \emph{nominal}, that is, they can be compared with only equality or disequality, but not ordered or obtained by arithmetic operations.
However, pointer arithmetic is widely used in low-level programs 
to access data inside memory blocks. Memory allocators are such low-level programs. 
They assume that the memory is organized into a linked list of memory chunks, 
called heap lists in this paper; pointer arithmetic is
used to jump from a memory chunk to the next one \cite{Knuth97,WJ+95}. 
%
%
There have been some work to use separation logic for the static analysis and deductive verification of these low-level programs~\cite{CalcagnoDOHY06,MartiAY06,Chlipala11}. 
Moreover, researchers have also started to investigate the decision procedures for separation logic fragments containing pointer arithmetic. 
For instance, \emph{Array separation logic} (ASL) was proposed in~\cite{BrotherstonGK17}, which includes pointer arithmetic, the constraints $\blk(x,y)$ denoting a block of memory units from the address $x$ to $y$, as well as the points-to constraints $x\pto v$. It was shown in~\cite{BrotherstonGK17} that for ASL, the satisfiability is NP-complete and the entailment is in coNEXP resp. coNP for quantified resp. quantifier-free formulas.
Furthermore, the decidability can be preserved even if ASL is extended with the list-segment predicate~\cite{DBLP:journals/corr/abs-1802-05935}.
Very recently, Le identified in~\cite{DBLP:conf/vmcai/Le21} two fragments 
of ASL extended with a class of general inductive predicates for which 
the satisfiability (but not entailment) problem is decidable.

Nevertheless, none of the aforementioned work is capable of reasoning about heap lists, or generally speaking, pointer arithmetic inside inductive definitions, in a \emph{sound and complete} way. The state-of-the-art static analysis and verification tools, e.g. \cite{CalcagnoDOHY06,Chlipala11,MartiAY06}, resort to sound  (but incomplete) heuristics or interactive theorem provers, for reasoning about heap lists. 
On the other hand, the decision procedures for ASL or its extensions, e.g. \cite{BrotherstonGK17,DBLP:journals/corr/abs-1802-05935,DBLP:conf/vmcai/Le21}, are unable to tackle heap lists.
This motivates us to raise the following research question: \emph{Can decision procedures be achieved for separation logic fragments allowing pointer arithmetic inside inductive definitions}?



\hide{
The \emph{symbolic heap} (SH) fragment \todo{MS: change motivation}
of separation logic has been introduced in~\cite{BerdineCO04}
because it provides a good compromise between expressivity
and tractability. This fragment includes existentially quantified formulas
which are conjunctions of 
	aliasing constraints $x = y$ and $x \neq y$ between pointers $x$ and $y$,
and heap constraints $x\pto v$ expressing that at the address $x$ is allocated
    a memory zone containing the value $v$.
    The separating conjunction composes the heap constraints by ensuring 
    that the allocated memory blocks are disjoint.
To capture properties of heaps with unbounded data structures, the SH fragment is
extended with predicates that may be inductively defined using SH formulas or
	defined by the semantics, i.e., built-in.
The first category includes the predicate specifying acyclic singly linked list segments defined by the following two rules:
\begin{eqnarray}
\ls{}(x,y) & \Leftarrow & x = y : \emp \label{eq:ls-0} \\
\ls{}(x,y) & \Leftarrow & \exists x'\cdot x \neq y : x \pto x' \sepc \ls{}(x',y) \label{eq:ls-rec}
\end{eqnarray}
The satisfiability problem for the SH fragment with the $\ls{}$ predicate is in PTIME~\cite{CookHOPW11} and efficient solvers have been implemented for it~\cite{SLCOMPsite}.
This situation changes for more complex definitions of predicates, 
but still there are decidable fragments for which the satisfiability problem
is EXPTIME~\cite{DBLP:conf/lpar/EchenimIP20,DBLP:conf/lpar/KatelaanZ20}
in general and NP-complete if the arity of the predicate 
is bound by a known constant~\cite{DBLP:conf/csl/BrotherstonFPG14}.
In the second category of predicates, one which is of interest for this paper 
is the memory block predicate $\blk(x,y)$ introduced in \cite{CalcagnoDOHY06}
to capture properties of memory allocators. This predicate has been formalized
as part of the \emph{array separation logic} (ASL) 
	introduced in~\cite{BrotherstonGK17}.
The predicate denotes a block of memory units (bytes or integers) 
	between addresses $x$ and $y$. It is usually combined with
	pointer arithmetics allowing to compute addresses in the memory blocks.
This combination is enough to increase the complexity class of the
	decision problems in ASL: the satisfiability is NP-complete, 
	the bi-abduction is NP-hard and 
	the entailment is EXPTIME resp. coNP-complete for quantified resp. quantifier-free formulas~\cite{BrotherstonGK17}.
The ASL fragment extended with the $\ls{}$ predicate has been studied 
in~\cite{DBLP:journals/corr/abs-1802-05935}, and it is decidable for 
satisfiability and entailment.

Still, the ASL fragment or its extension with $\ls{}$ are not expressive enough 
to specify the heap list data structure that is defined by the following rules:
\begin{eqnarray}
\hls{}(x,y) & \Leftarrow & x = y : \emp \label{eq:hls-0} \\
\hls{}(x,y) & \Leftarrow & \exists x'\cdot x'-x \ge 2 : x \pto x'-x \sepc \blk(x+1,x') \sepc \hls{}(x',y) \label{eq:hls-rec}
\end{eqnarray}
From this definition, a heap list between addresses $x$ and $y$ is 
	either an empty block if $x$ and $y$ are aliased,
    or a list cell, called \emph{chunk}, starting at address $x$ and ending at
    address $x'$ followed by a heap-list from $x'$ to $y$.
    The chunk stores its size $x'-x$ at its start 
		(atom $x\pto x'-x$), that we also call \emph{chunk's header}.
	The \emph{chunk's body} is a memory block (atom $\blk(x+1,x')$) starting
		after the header and ending before the next heap-list's chunk.
Although this fragment has been used in the static analysis of memory
allocators~\cite{CalcagnoDOHY06} as well as deductive verification~\cite{Chlipala11,MartiAY06}, its decidability
has not been studied because these tools either employed sound heuristics or interactive theorem provers, but not decision procedures.
A special instance where all chunks have the same size
has been used in the deductive verification of 
	an implementation of the array list collection~\cite{CauderlierS18},
but required an interactive prover \textsc{VeriFast} and user-provided
lemma. 
}

\mypar{Contribution.}
In this work, we propose decision procedures for a fragment of separation logic called {\slah}, which allows pointer arithmetic inside inductive definitions, 
so that inductive predicates specifying heap lists can be defined. 
We consider both satisfiability and entailment problems and show that they are NP-complete and coNP-complete respectively. 
The decision procedure for satisfiability is obtained by computing an equi-satisfiable abstraction in Presburger arithmetic, whose crux is to show that the summaries of the heap list predicates, which are naturally formalized as existentially quantified non-linear arithmetic formulas, can actually be transformed into Presburger arithmetic formulas. 
The decision procedure for entailment, on the other hand, reduces the problem to multiple instances of an ordered entailment problem, where all the address terms of spatial atoms are ordered. 
The ordered entailment problem is then decided by matching each spatial atom in the consequent to some spatial formula obtained from the antecedent by partitioning and splitting the spatial atoms according to the arithmetic relations between address variables.
Splitting a spatial atom into multiple ones in the antecedent is attributed to pointer arithmetic and unnecessary for SH fragments with nominal addresses. 

We implemented the decision procedures on top of \cspen\ solver~\cite{GuCW16}.   
We evaluate the performance of the new solver, called {\cspenp}~\cite{CompSpenSite}, 
on a set of formulas originating from path conditions and verification conditions of programs working on heap lists in memory allocators. 
We also randomly generate some formulas, in order to test the scalability of {\cspenp} further.
The experimental results show that {\cspenp} is able to solve the satisfiability and entailment problems for {\slah} efficiently (in average, less than 1 second for satisfiability and less than 15 seconds for entailment).

To the best of our knowledge, this work presents the first decision procedure and automated solver for decision problems in a separation logic fragment allowing both pointer arithmetic and memory blocks inside inductive definitions. 

\hide{
The first part of our contributions is a theoretical study 
of the decision problems for ASL extended with 
the $\hls{}$ predicate. 
We propose decision procedures based on abstractions in Presburger arithmetic.
The second part of the contribution is experimental.
We implemented the decision procedures in a solver and
we applied it on a set of formulas originating from 
	path conditions and verification conditions of programs working on heap-lists. 
We also push the solver in the corner cases of our decision procedures
by providing randomly generated problems for each case.
This study is original to our knowledge, although the separation logic
	with heap-list has been used in 
	static analysis~\cite{CalcagnoDOHY06}
	and deductive verification~\cite{Chlipala11,MartiAY06}
	of memory allocators. Indeed, such work uses sound procedures 
	or interactive theorem provers and does not consider the decision problem.
The logic ASL has been introduced in~\cite{BrotherstonGK17} 
	where no inductively defined predicate is used.
An extension of ASL called SLA is considered 
	in~\cite{DBLP:journals/corr/abs-1802-05935} 
	where the only inductive predicate is the classic singly linked list $\ls{}$.
This year, \cite{DBLP:conf/vmcai/Le21} proposed a decision procedure dealing
    with satisfiability of ASL extended with inductively defined predicates. 
    The idea is to compute a so-called base formula for predicate atoms 
    that is enough for deciding satisfiability. 
Our class of heap-list predicates is a strict sub-class of inductive definitions
	considered in \cite{DBLP:conf/vmcai/Le21}, but we are able to compute an exact
    summary of the predicate atoms which is used in both satisfiability and
    entailment decision procedures.
}

\mypar{Organization.}
A motivating example and an overview of the decision procedures are provided in Section~\ref{sec:over-hls}.
The logic {\slah} is defined in Section~\ref{sec:logic-hls}.
Then the decision procedures for the satisfiability and entailment problems are presented in Sections~\ref{sec:sat-hls} \resp~\ref{sec:ent}.
%
%
The implementation details and the experimental evaluation
	are reported in Section~\ref{sec:exp-hls}.



\section{Motivating example and overview}
\label{sec:over-hls}

This section illustrates the use of the logic $\slah$ for the specification 
	of programs manipulating heap lists and
gives an overview of the ingredients used by the decision procedures
	we propose.

Figure~\ref{fig:search} presents the motivating example.
The function \lst{search} scans 
	a heap list between the addresses \lst{hbeg} (included) 
						and \lst{hend} (excluded)
	to find a chunk of size greater than the one given as parameter. 
A heap list is a block of memory divided into \emph{chunks},
such that each chunk stores its size at its start address.
For readability, we consider that the size counts 
the number of machine integers (and not bytes).
Scanning the list requires pointer arithmetics to
compute the start address of the next chunk.
After the size, the chunk may include additional information 
about the chunk in so-called  \emph{chunk's header}.
For readability, we consider 
that the header contains only the size information.
The data carried by the chunk, called \emph{chunk's body},
starts after the chunk header
and ends before the header of the next chunk.


\begin{figure}[t]
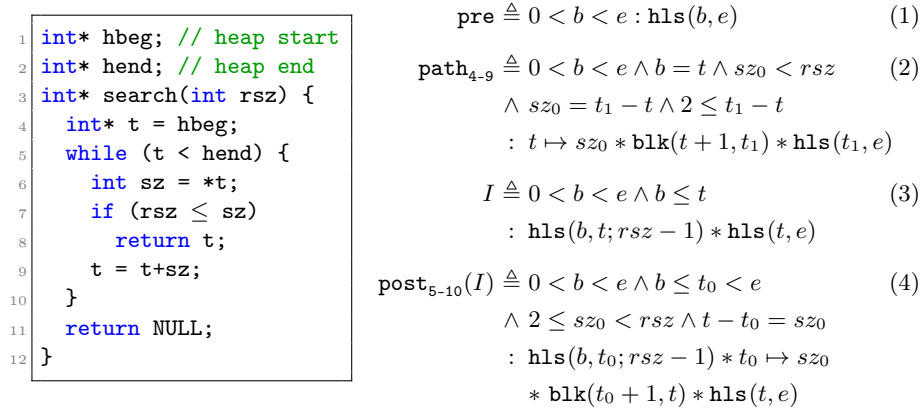

{\small
\centering
\begin{minipage}[t]{0.33\textwidth}
\vspace{6mm}
\lstset{language=C,numbers=left,stepnumber=1,fontadjust=true}
\begin{lstlisting}
int* hbeg; // heap start
int* hend; // heap end
int* search(int rsz) {
  int* t = hbeg;
  while (t < hend) {
    int sz = *t;
    if (rsz <= sz)
      return t;
    t = t+sz;
  }
  return NULL;
}
\end{lstlisting}
\end{minipage}
\begin{minipage}[t]{0.62\textwidth}
\vspace{0pt}
\begin{eqnarray}\label{eq:over-sat}
\texttt{pre} \label{eq:over-pre}
        & \triangleq & 0 < b < e : \hls{}(b,e) 
\\[2mm]
\texttt{path}_{\texttt{4-9}} \label{eq:over-path-once}
	    & \triangleq & 0 < b < e \land b = t \land sz_0 < rsz
\\
	    & \land & sz_0 = t_1 - t \land 2 \le t_1 - t  
\nonumber \\
        & :     & t \pto sz_0 \sepc \blk(t+1, t_1) \sepc \hls{}(t_1,e)
\nonumber \\[2mm]
I \label{eq:over-inv}
	    & \triangleq & 0 < b < e \land b \le t
	    \\
	    \nonumber
	    & : & \hls{}(b,t; rsz-1) \sepc \hls{}(t,e)
\\[2mm]
\texttt{post}_{\texttt{5-10}}(I) \label{eq:over-inv-once}
        & \triangleq &  0 < b < e \land b\le t_0 < e  \\
		& \land & 2 \le sz_0  < rsz \land t - t_0=sz_0 
\nonumber \\
		& :     & \hls{}(b,t_0;rsz-1) \sepc t_0 \pto sz_0 
\nonumber \\
		&       & \sepc\ \blk(t_0+1,t) \sepc \hls{}(t,e)
\nonumber
\end{eqnarray}
\end{minipage}
\vspace{-2eX}
\caption{Searching a chunk in a heap list 
and some of its specifications in \slah}
\label{fig:search}
}
\vspace{-4mm}
\end{figure}

The right part of Figure~\ref{fig:search} provides 
several formulas in the logic \slah\ defined in Section~\ref{sec:logic-hls}.
The formula \texttt{pre} specifies the precondition of \lst{search}, 
i.e., there is a heap list from \lst{hbeg} to \lst{hend}
represented by the logic variables $b$ resp. $e$.
The pure part of \texttt{pre} (left side of ':') is a linear constraint
on the addresses. The spatial part of \texttt{pre} (right side of ':') employs the
predicate $\hls{}$ defined inductively by the last two rules below
and for which we define the following shorthand:
\begin{align}
\label{eq:hlsv-infty}
\hls{}(x, y) & \equiv \hls{}(x,y;\infty) \mbox{ i.e., no upper bound on chunks' size, where}
\\
\label{eq:hlsv-emp}
\hls{}(x, y; v) & \Leftarrow x=y : \emp  \\
\label{eq:hlsv-rec}
\hls{}(x, y; v) & \Leftarrow \exists z \cdot 2 \le z - x \le v : 
	x \pto z-x \sepc \blk(x+1,z) \sepc \hls{}(z,y; v) 
\end{align}
The inductive definition states that a heap list from $x$ to $y$ is 
	either an empty heap if $x$ and $y$ are aliased (Eq.~(\ref{eq:hlsv-emp})),
    or a \emph{chunk} starting at address $x$ and ending at
    address $z$ followed by a heap list from $z$ to $y$.
    The chunk stores its size $z-x$ in the header
		(atom $x\pto z-x$).
	The chunk's body is specified by a memory block atom, $\blk(x+1,z)$, starting
		after the header and ending before the next chunk.
The parameter $v$ is an upper bound on the size of all chunks in the list.

The formula $\texttt{path}_{\texttt{4-9}}$ is generated by 
	the symbolic execution of \lst{search} 
	starting from \texttt{pre} and 
	executing the statements from line 4 to 9. 
It's satisfiability means that the line 9 is reachable from 
a state satisfying \texttt{pre}.
  
The decision procedure for the satisfiability of \slah\
	in Section~\ref{sec:sat-hls} is based
on the translation of a \slah\ formula $\varphi$ into 
an equi-satisfiable Presburger arithmetic (\PbA) formula $\varphi^P$. 
The delicate point 
with respect to the previous work, e.g., \cite{DBLP:journals/corr/abs-1802-05935},  
is to compute a summary in \PbA\ for the $\hls{}$ atoms.
The summary computed for the atom $\hls{}(x,y;v)$ 
when the heap it denotes is not empty
is
$(v=2 \land \exists k\cdot k > 0 \land 2k = y - x) \lor (2 < v \land 2 < y-x)$,
i.e., either all chunks have size 2 and the heap-list has an even size or
$v$ and the size of the heap-list are strictly greater than 2.
For the empty case, the summary is trivially $x=y$.
The other spatial atoms $a$ (e.g., $x \pto v$ and $\blk(x,y)$) 
are summarized by constraints on 
their start address denoted by $\atomhead(a)$ (e.g., $x$) and 
their end address denoted by $\atomtail(a)$ (e.g., $x+1$ resp. $y$).
For the points-to atom, this constraint is true, 
but for the $\blk(x,y)$ atom, the constraint is 
$x = \atomhead(a) < \atomtail(a) = y$.
Therefore, the spatial part of $\texttt{path}_{\texttt{4-9}}$ is translated 
into the \PbA\ formula $\texttt{pb}^\Sigma_{\texttt{4-9}}$:
\begin{align*}
\underbrace{t+1 < t_1}_{\blk{}(t+1,t_1)} & \land 
  \underbrace{(t_1 = e \lor 2 < e-t_1)}_{\hls{}(t_1,e)}.
\end{align*}
Alone, $\texttt{pb}^\Sigma_{\texttt{4-9}}$ does not capture the semantics of 
	the separating conjunction in the spatial part.
For that, we add a set of constraints 
	expressing the disjointness of memory blocks occupied by the spatial atoms.
For our example, these constraints are 
	$\texttt{pb}^\sepc_{\texttt{4-9}} \triangleq t_1 < e \le t ~\lor~ t+1 < t_1 \le e$.
By conjoining the pure part of $\texttt{path}_{\texttt{4-9}}$ with 
formulas $\texttt{pb}^\Sigma_{\texttt{4-9}}$ and 
         $\texttt{pb}^\sepc_{\texttt{4-9}}$,
we obtain an equi-satisfiable existentially quantified {\PbA} formula
 whose satisfiability is a NP-complete problem.

\smallskip
\noindent
The {\PbA} abstraction is also used to decide the validity of entailments 
in \slah\ in combination with a matching procedure between spatial parts.
To illustrate this decision procedure presented in Section~\ref{sec:ent},
we consider the verification conditions generated by the proof of 
 the invariant $I$ from Equation~(\ref{eq:over-inv}) for the \lst{search}'s loop. 
It states that \lst{t} splits the heap list in two parts. 
%
%
To illustrate a non-trivial case of the matching procedure used
	in the decision procedure for entailment,
	we consider the verification condition (VC)
	for the inductiveness of $I$.
	The antecedent of the VC is the formula 
	$\texttt{post}_{\texttt{5-10}}(I)$ in Figure~\ref{fig:search},
	obtained by symbolically executing the path including the statements
	at lines 5--7 and 9 starting from $I$.
The \PbA\ abstraction of $\texttt{post}_{\texttt{5-10}}(I)$ is satisfiable
and entails the following ordering constraint on the terms used by the spatial atoms:
$0 < b \le t_0 < t_0+1 < t \le e$.
The spatial atoms used in the antecedent and consequent 
are ordered using the order given by this contraint as follows:
\[\begin{array}{lcc}
\textrm{antecedent:} & \hls{}(b,t_0;rsz-1) \sepc\ t_0 \pto sz_0 \sepc\ \blk(t_0+1,t) 
				     & \sepc\ \hls{}(t,e) \\
\textrm{consequent:} & \hls{}(b,t;rsz-1) & \sepc\ \hls{}(t,e)
\end{array}\]
The matching procedure starts by searching 
	a prefix of the sequence of atoms in the antecedent
	that matches the first atom in the consequent, $\hls{}(b,t;rsz-1)$, such that
	the start and end addresses of the sequence are respectively $b$ and $t$.
The sequence found is $\hls{}(b,t_0;rsz-1) \sepc t_0 \pto sz_0 \sepc \blk(t_0+1,t)$
	which also satisfies the condition (encoded in \PbA)
		that it defines a contiguous memory block between $b$ and $t$.
The algorithm continues by trying to prove the matching found
	using a composition lemma $\hls{}(b,t_0;rsz-1) \sepc \hls{}(t_0,t;rsz-1) \models \hls{}(b,t;rsz-1)$
	and the unfolding of the atom $\hls{}(t_0,t;rsz-1)$.
The \PbA\ abstraction of the antecedent is used to ensure that 
    $sz_0$ is the size of the heap list starting at $t_0$, i.e., $sz_0=t-t_0$
    and the constraint $2 \le sz_0 \le rsz-1$ is satisfied.
	For this ordering of terms (i.e., $0 < b \le t_0 < t_0+1 < t \le e$), 
	the algorithm is able to prove the matching. Since this ordering is the only one compatible with the {\PbA} abstraction of the antecedent, we conclude that the entailment is valid.

\section{\slah, a separation logic fragment for heap-list}
\label{sec:logic-hls}

This section defines the syntax and the semantics of 
the logic \slah, which extends array separation logic (ASL)~\cite{BrotherstonGK17}
with the $\hls{}$ predicate.
Notice that ASL borrows the structure of formulas from 
	the symbolic heap fragment of SL~\cite{BerdineCO04}
	and introduces a new spatial atom for memory blocks.

\begin{definition}[\slah\ Syntax]\label{def:syn-slah}
Let $\cV$ denote an infinite set of address variables ranging over $\NN$, the set of natural numbers. 
The syntax of terms $t$, pure formulas $\Pi$, spatial formulas $\Sigma$
and symbolic heaps $\varphi$ is given by the following grammar: 
\[
\begin{array}{l l l r}
t & ::= x \mid n \mid t+t &\ \ \ & \mbox{terms}
\\ 
\Pi & ::= \top \mid \bot \mid t=t \mid t \ne t \mid t \leq t \mid t < t \mid \Pi \land \Pi &\ \ \  & \mbox{pure formulas}
\\ 
\Sigma & ::= \emp \mid t \pto t \mid \blk(t, t) \mid \hls{}(t, t; t^\infty) \mid \Sigma \sepc \Sigma &\ \ \  & \mbox{spatial formulas}
\\ 
\varphi & ::= \exists \vec{z}\cdot \Pi : \Sigma &\ \ \  & \mbox{formulas}
\end{array}
\]
where $x$ and $\vec{z}$ are variables resp. set of variables from $\cV$, 
$n \in \NN$, $t^\infty$ is either a term or $\infty$,
$\hls{}$ is the predicate defined inductively by 
the rules in Equations~(\ref{eq:hlsv-emp}) and (\ref{eq:hlsv-rec}),
where $v$ is a variable interpreted over $\NN\cup\{\infty\}$. 
An atom $\hls{}(x,y;\infty)$ is also written $\hls{}(x,y)$;
Whenever one of $\Pi$ or $\Sigma$ is empty, we omit the colon. 
We write $\fv(\varphi)$ for the set of free variables occurring in $\varphi$. 
If $\varphi = \exists\vec{z}\cdot\Pi:\Sigma$, 
	we write $\qf(\varphi)$ for $\Pi:\Sigma$, 
	the quantifier-free part of $\varphi$.
We define $\atomhead(a)$ and $\atomtail(a)$ where $a$ is a spatial atom as follows:
\begin{itemize}
\item if $a \equiv t_1 \pto t_2$
          then $\atomhead(a)\triangleq t_1$, $\atomtail(a)\triangleq t_1+1$,
\item if $a\equiv \blk(t_1, t_2)$
          then $\atomhead(a)\triangleq t_1$ and $\atomtail(a)\triangleq t_2$,
\item if $a\equiv \hls{}(t_1, t_2; t_3)$
      then $\atomhead(a) \triangleq t_1$ and $\atomtail(a) \triangleq t_2$.
\end{itemize}
\end{definition}

For $n, n' \in \NN$ such that $n \le n'$, we use $[n, n']$ to denote the set $\{n, \cdots, n'\}$. Moreover, we use $[n]$ as an abbreviation of $[1,n]$.
We interpret the formulas in the  classic model of separation logic
built from a stack $s$ and a heap $h$.
The stack is a function $s:\cV \tfun \NN$.
It is extended to terms, $s(t)$, to denote 
the interpretation of terms in the given stack;
$s(t)$ is defined by structural induction on terms:
$s(n) = n$, and
$s(t + t') = s(t) + s(t')$.
We denote $s[x\gets n]$ for a stack defined as $s$ except for the
interpretation of $x$ which is $n$.
Notice that $\infty$ is used only to give up the upper bound 
	on the value of the chunk size in the definition of $\hls{}$ 
	(see Equations~(\ref{eq:hlsv-emp})--(\ref{eq:hlsv-infty})).
The heap $h$ is a partial function $\NN \pfun \NN$. 
We denote by $\wpos{h}$ the domain of a heap $h$. 
We use $h_1 \uplus h_2$ to denote the \emph{disjoint} union of $h_1$ and $h_2$, that is, $\wpos{h_1} \cap \wpos{h_2} =\emptyset$, and for $i \in \{1,2\}$, we have $(h_1\uplus h_2)(n) = h_i(n)$ if $n \in \wpos{h_i}$. 

\begin{definition}[\slah\ Semantics]
The satisfaction relation $s,h\models \varphi$, 
	where $s$ is a stack, $h$ a heap, and $\varphi$ a \slah\ formula,
	is defined by: 
\begin{itemize}
\item $s,h \models  \top$ always and never $s,h \models  \bot$,
\item $s,h \models  t_1 \sim t_2$ iff   
	$s(t_1) \sim s(t_2)$, where $\sim\in\{=,\ne,\le,<\}$,
\item
$s,h \models  \Pi_1 \land \Pi_2$  iff  
	$s,h \models \Pi_1$ and  $s,h\models \Pi_2$,
\item
$s,h \models  \emp$  iff $\wpos{h} = \emptyset$,
\item 
$s,h \models  t_1 \pto t_2$ iff $\exists n\in\NN$ s.t.
	$s(t_1)=n$, $\wpos{h}=\{n\}$, and $h(n)=s(t_2)$,
%
\item $s,h \models  \blk(t_1, t_2)$ iff  
	$\exists n,n'\in\NN$ s.t.  $s(t_1)=n$, $s(t_2)=n'$, $n < n'$, and
	  $\wpos{h}=[n,n'-1]$,
	 
\item $s,h \models  \hls{}(t_1, t_2;t_3)$ iff  
	$\exists k \in\NN$ s.t. $s,h \models \hls{k}(t_1, t_2;t_3)$,
	
\item $s,h \models \hls{0}(t_1, t_2;t^\infty)$ iff  
	$s,h \models t_1 = t_2 : \emp$,

\item 
$s,h \models \hls{\ell+1}(t_1, t_2;t^\infty)$ iff   
	$s,h \models 
		\exists z\cdot  2 \le z-t_1 \land \Pi' : t_1\pto z-t_1\ \sepc 
		 \blk(t_1+1,z) \sepc \hls{\ell}(z,t_2;t^\infty),
	$
	 where if  $t^\infty\equiv\infty$, then  $\Pi'\equiv \top$, otherwise,  
	 $\Pi' \equiv z-t_1 \le t^\infty$,

\item 
$s,h \models  \Sigma_1 \sepc \Sigma_2  \mbox{ iff } 
	\exists h_1,h_2\mbox{ s.t. } h=h_1\uplus h_2, 
	s,h_1\models\Sigma_1 \mbox{ and } s,h_2\models\Sigma_2$,

\item 
$s,h \models  \exists\vec{z}\cdot\Pi:\Sigma$ iff $\exists \vec{n}\in\NN^{|\vec{z}|}$ s.t.  
	$s[\vec{z}\gets\vec{n}],h\models\Pi$ and  $s[\vec{z}\gets\vec{n}],h\models\Sigma$.
\end{itemize}
\end{definition}

We write $A \models B$ for $A$ and $B$ (sub-)formula in \slah\ for
$A$ entails $B$, i.e., 
that for any model $(s,h)$ such that $s,h\models A$ then $s,h\models B$.

Notice that the semantics of $\blk(x,y)$ differs from the one given in~\cite{BrotherstonGK17} for $\mathtt{array}(x,y)$
because we consider that the location $y$ is the first after the last location in the memory block, as proposed in~\cite{CalcagnoDOHY06}.
Intuitively, an atom $\hls{}(x,y;v)$ with $v$ a variable
defines a heap lists where all chunks have sizes between $2$ and the value of $v$.
Notice that if $v < 2$ then the atom $\hls{}(x,y;v)$ has a model iff $x=y$.
With this semantics, the $\blk$ and $\hls{}$ predicates are compositional predicates~\cite{EneaSW15} and therefore they satisfy the following composition lemmas:
\begin{align}
\blk(x,y)\sepc\blk(y,z) \models & ~\blk(x,z) \label{lemma:blk}
\\
\hls{}(x,y;v)\sepc\hls{}(y,z;v) \models & ~\hls{}(x,z;v)  
\label{lemma:hls}
\end{align}




\section{Satisfiability problem of {\slah}}
\label{sec:sat-hls}

The satisfiability problem for an \slah\ formula $\varphi$ is to decide
whether there is a stack $s$ and a heap $h$ such that $s,h \models \varphi$.
In this section, we propose a decision procedure for the satisfiability problem, 
thus showing that the satisfiability problem is NP-complete.

\begin{theorem}\label{thm-sat}
The satisfiability problem of {\slah} is NP-complete.
\end{theorem}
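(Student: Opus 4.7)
\noindent\emph{Proof plan.} The plan is to establish NP-membership by translating every \slah\ formula $\varphi$ into an equi-satisfiable existential Presburger arithmetic formula $\varphi^P$ of polynomial size, and then appealing to the NP-completeness of existential \PbA\ satisfiability. For NP-hardness, I would observe that {\slah} syntactically and semantically subsumes ASL: any \slah\ formula whose spatial part uses only $\blk$ and $\pto$ atoms is an ASL formula with the same models, so the NP-hardness shown in~\cite{BrotherstonGK17} transfers directly.

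For the translation, I would define $\varphi^P$ as the existential closure of the conjunction of three ingredients. First, the pure part $\Pi$ of $\varphi$ is kept as is. Second, for each spatial atom $a$ occurring in $\varphi$ I attach a \PbA\ \emph{summary} $\summary(a)$ that constrains $\atomhead(a)$ and $\atomtail(a)$ and characterises exactly when $a$ admits a heap: $\summary(x\pto t) \triangleq \top$, $\summary(\blk(x,y)) \triangleq x<y$, and the nontrivial case $\summary(\hls{}(x,y;v))$ is addressed below. Third, to capture the semantics of the separating conjunction, I add for every pair $(a_1,a_2)$ of spatial atoms in $\varphi$ a disjointness constraint of the form $\atomtail(a_1) \le \atomhead(a_2) \lor \atomtail(a_2) \le \atomhead(a_1)$, suitably weakened when one of the atoms may denote an empty heap (e.g.\ an $\hls{}$ with aliased endpoints). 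Equi-satisfiability is then argued in both directions: a stack satisfying $\varphi^P$ guides the construction of pairwise disjoint heaps witnessing each spatial atom via its summary, and conversely the stack component of any \slah\ model of $\varphi$ satisfies $\varphi^P$ by erasure.

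The hard part, and the only nontrivial step, is computing $\summary(\hls{}(x,y;v))$. Unfolding the inductive definition shows that the atom holds iff $y-x$ decomposes as $c_1 + \cdots + c_k$ with $k \ge 0$ summands and each $c_i \in [2,v]$ (with the upper bound dropped when $v$ is $\infty$); this is a priori a non-linear arithmetic statement, since the number $k$ of summands is itself a variable. My plan for linearising it is a case split on $v$ based on an elementary arithmetic fact: when $v = 2$ every chunk has size exactly $2$, so $y-x$ must be a positive even number, captured by a single existential witness $k$ with $2k = y-x$; when $v \ge 3$ the sizes $2$ and $3$ alone already realise every $y-x \ge 2$, so the nontrivial condition reduces to the linear inequality $y-x \ge 2$. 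Combining these with $x=y$ for the empty unfolding yields a quantifier-free \PbA\ formula of constant size per $\hls{}$ atom, so $\varphi^P$ is polynomial in $|\varphi|$, and the NP upper bound for existential \PbA\ closes the proof.
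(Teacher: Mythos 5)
Your proposal is correct and follows essentially the same route as the paper: NP-hardness by restriction to the ASL fragment, and NP-membership via a polynomial-size equi-satisfiable Presburger abstraction built from per-atom summaries ($\top$ for points-to, $x<y$ for $\blk$) plus pairwise disjointness constraints guarded by non-emptiness, with the crux being the linearisation of the $\hls{}$ summary through the case split $v=2$ versus $v>2$. The only (immaterial) difference is how you justify that the nonempty case for $v>2$ collapses to $y-x\ge 2$: you invoke the fact that chunk sizes $2$ and $3$ realise every integer $\ge 2$, whereas the paper observes that the intervals $[2k,kv]$ for $k\ge 2$ overlap and union to $[4,\infty)$, which together with the $k=1$ interval $[2,v]$ yields the identical summary formula.
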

The NP lower bound follows from that of ASL in \cite{BrotherstonGK17}. The NP upper bound is achieved by encoding the satisfiability problem of {\slah} as that of an existentially quantified Presburger arithmetic formula. The rest of this section is devoted to the proof of the NP upper bound.

Presburger arithmetic (\PbA) is the first-order theory with equality
of the signature $\langle 0,1,+, <, (\equiv_n)_{n \in \NN \setminus \{0\}}\rangle$ interpreted over the
domain of naturals $\NN$ with `$+$' interpreted as the addition,  
`$<$' interpreted as the order relation, and $\equiv_n$ interpreted as the congruence relation modulo $n$.\footnote{%
Although `$<$' may be encoded using existential quantification in \PbA\ over naturals,
we prefer to keep it in the signature of \PbA\ to obtain quantifier free formulas.%
}
\PbA\ is a useful tool for showing complexity classes because its 
satisfiability problem belongs to various complexity classes depending
on the number of quantifier alternations~\cite{Haase2018ASG}.
In this paper, we consider quantifier-free \PbA\ formulas (abbreviated as \qfpa) and the $\mathsf{\Sigma}_1$-fragment of \PbA\ (abbreviated as \EPbA), which
contains existentially quantified Presburger arithmetic formulas. 
We recall that the satisfiability problem of {\qfpa} and {\EPbA} is NP-complete.

%

We basically follow the same idea as ASL to build 
a {\qfpa} abstraction of a \slah\ formula $\varphi$, denoted by $\abs(\varphi)$,
that encodes its satisfiability: 
\begin{compactitem}
\item At first, points-to atoms $t_1 \pto t_2$ are transformed into $\blk(t_1, t_1+1)$. 
\item Then, the block atoms $\blk(t_1, t_2)$ are encoded by the constraint $t_1 < t_2$.
\item The predicate atoms $\hls{}(t_1, t_2; t_3)$, absent in ASL, are encoded by a formula in {\qfpa}, $t_1 = t_2 \vee (t_1 < t_2 \wedge \abs^+(\hls{}(t_1, t_2; t_3)))$. 
\item Lastly, the separating conjunction is encoded by an {\qfpa} formula constraining the address terms of spatial atoms. 
\end{compactitem}
The Appendix~\ref{app:sat-hls} provides more details.
The crux of this encoding and 
its originality with respect to the ones proposed for ASL in~\cite{BrotherstonGK17}
is the computation of $\abs^+(\hls{}(t_1, t_2; t_3))$, 
which are the least-fixed-point summaries in {\qfpa} for $\hls{}(t_1, t_2; t_3)$.
In the sequel, we show how to compute them.





%
%

\smallskip
Intuitively, the abstraction of the predicate atoms $\hls{}(t_1, t_2; t_3)$
	shall summarize the relation between $t_1$, $t_2$ and $t_3$ 
	for all $k \ge 1$ unfoldings of the predicate atom.
From the fact that the pure constraint in the inductive rule of $\hls{}$ is $2 \le x' - x \le v$, it is easy to observe that 
for each $k \ge 1$, $\hls{k}(t_1, t_2; t_3)$ can be encoded by $2 k \le t_2-t_1 \le k t_3$. It follows that $\hls{}(t_1, t_2; t_3)$ can be encoded by $\exists k.\ k \ge 1 \wedge 2k \le t_2 - t_1 \le k t_3$.
If $t_3 \equiv\infty$, then $\exists k.\ k \ge 1 \wedge 2 k \le t_2-t_1 \le k t_3$ is equivalent to  $\exists k.\ k \ge 1 \wedge 2k \le t_2 - t_1 \equiv 2 \le t_2 - t_1$, thus a {\qfpa} formula.
Otherwise, $2 k \le t_2-t_1 \le k t_3$ is a non-linear formula since $k t_3$ is a non-linear term if $t_3$ contains variables. 
The following lemma states that 
$\exists k.\ k \ge 1 \wedge 2 k \le t_2 - t_1 \le k t_3$ can actually be turned into an equivalent {\qfpa} formula.

\begin{lemma}[Summary of $\hls{}$ atoms]\label{lem-hls}
Let $ \hls{}(x, y; z)$ be an atom in \slah\
representing a non-empty heap, where $x, y, z$ are three distinct variables in $\cV$.
We can construct in polynomial time an {\qfpa} formula $\abs^+(\hls{}(x,y; z))$
which summarizes $\hls{}(x, y; z)$, namely we have 
for every stack $s$, $s \models \abs^+(\hls{}(x,y; z))$ iff 
there exists a heap $h$ such that $s, h \models \hls{}(x, y, z)$. 
\end{lemma}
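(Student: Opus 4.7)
The plan is to eliminate the existential quantifier in the non-linear formula $\exists k.\ k \ge 1 \wedge 2k \le y - x \le k z$, whose only source of non-linearity is the product $k z$. The key idea is to perform a case analysis on $z$ so that in each branch the product degenerates into a linear expression, and then to argue that the resulting disjunction collapses to a constant-size \qfpa\ formula. Setting $d \triangleq y - x$ for brevity, I first observe that $k \ge 1$ already forces $d \ge 2$, so the task reduces to characterising, for each value of $z$, when a witness $k \ge 1$ for $2k \le d \le k z$ exists.

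I split into three regimes. If $z \le 1$, then $d \le k z \le k$ combined with $2k \le d$ yields $2k \le k$, which is impossible, so this branch contributes $\bot$. If $z = 2$, the two inequalities collapse to $d = 2k$, so a witness exists iff $d$ is a positive even number, expressible with the congruence predicate as $y - x \ge 2 \wedge (y - x) \equiv_2 0$. If $z \ge 3$, I claim every $d \ge 2$ admits a witness: taking $k = \lfloor d/2 \rfloor$ gives $2k \le d$ by definition, while a short parity split shows $k z \ge d$ (if $d$ is even, $k z = (d/2)z \ge d$ since $z \ge 2$; if $d$ is odd with $d \ge 3$, then $k = (d-1)/2 \ge 1$ and $k z \ge 3(d-1)/2 \ge d$).

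Assembling the three branches gives the constant-size summary
\[
\abs^+(\hls{}(x,y;z)) \;\triangleq\; y - x \ge 2 \wedge z \ge 2 \wedge \bigl(z \ge 3 \vee (y-x) \equiv_2 0\bigr),
\]
which is clearly computable in polynomial time. The ``only if'' direction (any stack-heap pair satisfying the atom satisfies the summary) follows by a straightforward induction on the number of unfoldings $k$, which establishes $2k \le d \le k z$ at each step. For the ``if'' direction, the witness $k$ produced by the case analysis lets me explicitly build a heap: partition $[s(x), s(y))$ into $k$ chunks whose sizes are $\lfloor d/k \rfloor$ or $\lceil d/k \rceil$; each lies in $[2, s(z)]$ precisely because $2k \le d \le k z$, and the chunks compose into the required heap list by unfolding the inductive rule $k$ times.

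The main obstacle is the $z \ge 3$ case: the naive description $k \in [\lceil d/z \rceil, \lfloor d/2 \rfloor]$ of valid witnesses is itself non-linear, so one cannot simply write it down in \qfpa. The point of the concrete choice $k = \lfloor d/2 \rfloor$ is to replace this symbolic interval computation with the much coarser statement ``the interval is non-empty as soon as $z \ge 3$ and $d \ge 2$'', which is linear. Only the corner case $z = 2$ then needs an actual arithmetic constraint (the parity check), and this is what makes the apparently non-linear summary collapse to a small \qfpa\ formula.
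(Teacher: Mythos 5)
Your proof is correct and takes essentially the same route as the paper's: both reduce the non-empty atom to the non-linear condition $\exists k.\ k \ge 1 \wedge 2k \le y-x \le kz$ and eliminate the quantifier by a case split on $z$ ($z=2$ forcing the parity constraint, $z>2$ collapsing to $2 \le y-x$), yielding equivalent \qfpa\ formulas. The only difference is in the $z \ge 3$ sub-case, where the paper argues that the intervals $[2k,kz]$ for $k \ge 2$ overlap and union to $[4,\infty)$ while you exhibit the explicit witness $k = \lfloor (y-x)/2 \rfloor$ --- a minor, equally valid variation.
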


\hide{
\begin{proof}
The constraint that the atom represents a non-empty heap means that 
the inductive rule defining $\hls{}$ in Equation~(\ref{eq:hlsv-rec})
should be applied at least once. As explained above,
$\hls{}(x, y; z)$ is summarized by the formula $\exists k.\ k \ge 1 \wedge 2k \le y -x \le kz$, which is a non-linear arithmetic formula.
However, the previous formula
is actually equivalent to the disjunction of the two formulas corresponding to the following two cases:
\begin{compactitem}
\item If $2 = z$,  then $\abs^+(\hls{}(x, y; z))$ has as disjunct 
		$\exists k.\ k \ge 1 \land y -x = 2k$.
\item If $2 < z$, then we consider two sub-cases:
(a) If $k = 1$ 
	then $\abs^+(\hls{}(x, y; z))$ contains 
		$2 \leq y-x \le z$ as a disjunct.
(b) If $k \ge 2$ then we observe that the intervals 
	$[2k, k z]$ and $[2(k+1), (k+1) z]$
	overlap, 
	and consequently $\bigcup_{k \ge 2} [2k, kz] = [4, \infty)$.
	Therefore, $\abs^+(\hls{}(x, y; z))$ contains $4 \le y-x$ as a disjunct.
    Thus we obtain $2 < z ~\land~ \big(2 \leq y-x \le z \lor 4 \le y-x  \big)$, which can be simplified into $2 < z  ~\land~ 2 \le y - x$.
\end{compactitem}
To sum up, we obtain
\begin{align*}
\abs^+(\hls{}(x, y; z)) & \triangleq 
          \big(2 = z 
				 \land \exists k.\ k \ge 1 \land 2k = y-x \big) 
   ~\lor~ \big(2 < z
            \land 2 \le y - x
                          \big).
\end{align*}
\vspace{-2eX}\qed
\end{proof}
}

Since the satisfiability problem of {\qfpa} is NP-complete,
the satisfiability problem of \slah\ is in NP. 
%
%
The correctness of $\abs(\varphi)$ is guaranteed by the following result.

\begin{proposition}\label{prop-sat-correct}
A \slah\ formula $\varphi$ is satisfiable iff $\abs(\varphi)$ is satisfiable.
\end{proposition}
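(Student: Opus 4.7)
The plan is a straightforward equi-satisfiability argument by induction on the structure of $\varphi$, reducing to the atomic case where Lemma~\ref{lem-hls} does the heavy lifting. Since the top-level existential quantifiers commute across both $\varphi$ and $\abs(\varphi)$, I can assume $\varphi$ is quantifier-free of the form $\Pi : \Sigma$ with $\Sigma \equiv a_1 \sepc \cdots \sepc a_m$.

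\smallskip
\noindent\textbf{Soundness ($\Rightarrow$).} Suppose $s, h \models \varphi$. The pure part $\Pi$ is preserved verbatim, so $s \models \Pi$. Since $\Sigma \equiv a_1 \sepc \cdots \sepc a_m$, there exist pairwise disjoint heaps $h_1, \dots, h_m$ with $h = h_1 \uplus \cdots \uplus h_m$ and $s, h_i \models a_i$ for each $i$. For each atom $a_i$ I verify that $s$ satisfies its {\qfpa} abstraction: for $t_1 \pto t_2$ the translation to $\blk(t_1, t_1+1)$ gives the trivially true $t_1 < t_1 + 1$; for $\blk(t_1, t_2)$ the semantics directly forces $s(t_1) < s(t_2)$; and for $\hls{}(t_1, t_2; t_3)$ either $\wpos{h_i} = \emptyset$ (whence $s(t_1) = s(t_2)$) or Lemma~\ref{lem-hls} applied to $s$ and $h_i$ yields $s \models \abs^+(\hls{}(t_1, t_2; t_3))$. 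Finally, the disjointness constraints encoding $\sepc$ follow by reading off the domains $\wpos{h_i}$, which form disjoint intervals in $\NN$ whose endpoints coincide with $\atomhead(a_i)$ and $\atomtail(a_i)$.

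\smallskip
\noindent\textbf{Completeness ($\Leftarrow$).} Suppose $s \models \abs(\varphi)$. Again $s \models \Pi$. I construct a witness heap $h = h_1 \uplus \cdots \uplus h_m$ atom by atom: for $a_i \equiv \blk(t_1, t_2)$, let $h_i$ map each address in $[s(t_1), s(t_2){-}1]$ to $0$; for $a_i \equiv t_1 \pto t_2$, let $h_i = \{s(t_1) \mapsto s(t_2)\}$; and for $a_i \equiv \hls{}(t_1, t_2; t_3)$, Lemma~\ref{lem-hls} (together with the disjunct $t_1 = t_2$ handled via the empty heap) supplies a heap $h_i$ with $s, h_i \models a_i$. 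Each $s, h_i \models a_i$ by construction, so it only remains to check disjointness of the $\wpos{h_i}$. This is precisely what the $\sepc$-encoding portion of $\abs(\varphi)$ enforces, via constraints on the $\atomhead$ and $\atomtail$ terms of the atoms; a routine case analysis verifies that if the arithmetic constraints are satisfied then the intervals $\wpos{h_i}$ are pairwise disjoint. Hence $s, h \models \Sigma$, giving $s, h \models \varphi$.

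\smallskip
The main subtlety is the completeness direction for $\hls{}$ atoms: Lemma~\ref{lem-hls} guarantees the existence of \emph{some} heap for a single atom, but when multiple $\hls{}$ atoms appear in $\Sigma$, I must be able to choose the individual heaps so that they inhabit exactly the disjoint address intervals dictated by the $\sepc$-encoding. Fortunately the heap-list semantics is translation-invariant in the sense that only the start address $s(t_1)$ and the length $s(t_2) - s(t_1)$ matter, and Lemma~\ref{lem-hls} is phrased in terms of precisely those quantities; so the heap furnished by the lemma can always be relocated to the target interval. Checking this and the routine verification that the $\sepc$-encoding is tight enough to imply pairwise disjointness are the two places where one must be careful, but no new ideas are required beyond what is already in the encoding.
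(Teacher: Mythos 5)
Your proof is correct and follows essentially the same route as the paper's: the forward direction reads the abstraction off a given model (with the separating-conjunction constraints coming from disjointness of the sub-heaps' interval domains), and the backward direction builds a witness heap atom by atom, delegating the $\hls{}$ case to Lemma~\ref{lem-hls}. The only difference is cosmetic — the paper inlines an explicit chunk-by-chunk construction of the $\hls{}$ sub-heaps where you invoke the lemma, and your worry about "relocating" the lemma's heap is unnecessary since its domain is already forced to be $[s(t_1), s(t_2)-1]$ by the semantics — but you are, if anything, more careful than the paper about checking pairwise disjointness in the completeness direction.
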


From now on, we shall assume that {\bf $\abs(\varphi)$ is a \qfpa\ formula}. This enables using the off-the-shelf SMT solvers, e.g. Z3, to solve the satisfiability problem of \slah\ formulas.





\section{Entailment problem of {\slah}}
\label{sec:ent}

We consider the following entailment problem:
Given the symbolic heaps $\varphi$ and $\psi$ in \slah\ such that 
	$\psi$ is quantifier free and 
	$\fv(\psi) \subseteq \fv(\varphi)$,
	decide if $\varphi \models \psi$.
	Notice that the existential quantifiers in $\varphi$, if there is any, can be dropped.


	
Our goal in this section is to show that the entailment problem is decidable, as stated in the following theorem.

\begin{theorem}\label{thm-entail}
The entailment problem of {\slah} is coNP-complete. 
\end{theorem}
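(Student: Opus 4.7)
The coNP lower bound follows from that of ASL: by~\cite{BrotherstonGK17}, entailment for quantifier-free ASL is already coNP-hard, and since the quantifier-free ASL fragment embeds into {\slah} (by simply not using any $\hls{}$ atom), hardness transfers. The main content is the coNP upper bound, which I obtain by placing non-entailment in NP.

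The plan is to reduce entailment to a family of \emph{ordered} entailment subproblems indexed by total preorders on the set $T$ of address terms $\atomhead(a)$ and $\atomtail(a)$ for spatial atoms $a$ of $\varphi$ and $\psi$. Since $|T|$ is polynomial in $|\varphi| + |\psi|$, such a preorder $\pi$ can be guessed in polynomial time. Any model of $\varphi$ induces such a $\pi$, and the {\qfpa} abstraction $\abs(\varphi)$ from Section~\ref{sec:sat-hls} characterises precisely which $\pi$ are realisable. Hence $\varphi \models \psi$ iff the ordered entailment $\varphi \wedge \pi \models \psi$ holds for every $\pi$ compatible with $\abs(\varphi)$. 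To decide non-entailment in NP, I guess $\pi$ and check: (i) $\abs(\varphi) \wedge \pi$ is satisfiable, which is in NP by Theorem~\ref{thm-sat}; and (ii) the ordered entailment under $\pi$ fails.

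For step~(ii), sort the spatial atoms of $\varphi$ and $\psi$ by their start addresses (unambiguous once $\pi$ is fixed), and attempt to match each consequent atom $b$ with a contiguous segment of antecedent atoms whose first start and last end coincide with $\atomhead(b)$ and $\atomtail(b)$ modulo $\pi$. If no such segment exists, ordered entailment fails. Otherwise, fold the segment into $b$ using the appropriate composition lemma, namely~(\ref{lemma:blk}) for $\blk$ or~(\ref{lemma:hls}) for $\hls{}$, possibly after first \emph{splitting} an atom at the segment boundary --- the inverse of composition. Each individual match step reduces to a {\qfpa} query combining $\abs(\varphi) \wedge \pi$ with the summary $\abs^+(\hls{}(\cdot,\cdot;\cdot))$ of Lemma~\ref{lem-hls} to certify arithmetic consistency; a failing match is therefore witnessed by a polynomial-size unsatisfiable query.

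The principal obstacle is the splitting of $\hls{}$ atoms at internal chunk boundaries. Unlike $\blk(x,y)$, which can be split at any address strictly between its endpoints, an atom $\hls{}(x,y;v)$ admits only those splits that land on a chunk boundary produced by some unfolding of its inductive definition. Verifying that a candidate split address $z$ is a valid chunk boundary reduces to checking that both $\hls{}(x,z;v)$ and $\hls{}(z,y;v)$ admit models consistent with $\pi$, which by Lemma~\ref{lem-hls} is a {\qfpa} test --- in particular, the divisibility condition arising when $v=2$ must be respected so that the chunk lengths of both halves fit the summary. Once these SMT oracle calls are discharged, the entire procedure runs in NP, yielding the claimed coNP upper bound.
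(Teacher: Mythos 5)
Your overall architecture --- guessing a total preorder on address terms, checking compatibility with $\abs(\varphi)$, and then matching consequent atoms against contiguous segments of antecedent atoms via the composition lemmas --- is exactly the paper's skeleton. However, there are two genuine gaps, both stemming from the same confusion: the summary $\abs^+(\hls{}(x,y;z))$ of Lemma~\ref{lem-hls} is an \emph{existential} statement (for a given stack, \emph{some} heap satisfies the atom), whereas entailment requires a \emph{universal} statement over all heaps satisfying the antecedent. Your matching step ``reduces to a {\qfpa} query combining $\abs(\varphi)\wedge\pi$ with the summary $\abs^+$'' therefore certifies only that the address ranges are consistent, not that every unfolding of an antecedent $\hls{}$ atom is also an unfolding of the consequent atom. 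Concretely, when matching $\hls{}(t'_1,t'_2;t'_3)$ in the antecedent against $\hls{}(t_1,t_2;t_3)$ in the consequent, neither the naive test $t'_3\le t_3$ nor a summary-satisfiability test is correct: the paper's example $x<y\wedge y-x=4:\hls{}(x,y;3)\models\hls{}(x,y;2)$ is \emph{valid} despite $3>2$ (every unfolding scheme of length $4$ uses only chunks of size $2$), while under $y-x=5$ the entailment $\hls{}(x,y;5)\models\hls{}(x,y;4)$ is \emph{invalid} even though $\abs^+(\hls{}(x,y;4))$ is satisfiable with $y-x=5$, because the single-chunk scheme $(5)$ has no counterpart. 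The paper resolves this by introducing the \emph{effective upper bound} $\eub_\varphi(s)$ (the maximum chunk size over all unfolding schemes w.r.t.\ $s$), showing in Lemma~\ref{lem-eub} that it is {\qfpa}-definable by a formula $\xi_{eub,\varphi}(z)$, and proving in Lemma~\ref{lem-hls-hls} that the correct test is $C_\preceq\wedge\abs(\varphi)\models\forall z.\ \xi_{eub,\varphi}(z)\rightarrow z\le t_3$. Your proposal contains no analogue of this, and without it the matching step is unsound in both directions.

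The second gap is in your treatment of splitting. You verify that a candidate split address is a valid chunk boundary by checking that both halves ``admit models consistent with $\pi$.'' This again establishes only that \emph{some} models of the antecedent atom place a chunk boundary there; for validity you must also account for the models in which the split address falls strictly inside a chunk's body. The paper handles this by an explicit case analysis (the split address is the end of some $a_j$, inside a $\blk$ atom, or inside an $\hls{}$ atom) and, in the last case, by requiring that the formula built from ${\sf Ufld}_{x',x''}(\hls{}(t''_1,t''_2;t''_3))$ --- which asserts the split point lies inside a chunk's block --- is \emph{unsatisfiable}; otherwise the entailment is refuted. An existential boundary check alone would wrongly accept entailments whose antecedent has models where the required boundary is absent.
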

The coNP lower bound follows from the fact that the entailment problem of quantifier-free ASL formulas is also coNP-complete~\cite{BrotherstonGK17}. 
The remainder of this section is devoted to the proof of the coNP upper bound. 
%

In a nutshell, we show in Section~\ref{ssec:order}
that the entailment problem $\varphi \models \psi$ 
can be decomposed into a finite number of \emph{ordered entailment problems} 
$\varphi' \models_{\preceq} \psi'$ where
all the terms used as start and end addresses of spatial atoms in $\varphi'$ and $\psi'$
 are ordered by a preorder $\preceq$.
Then, we propose a decision procedure to solve ordered entailment problems.
In Section~\ref{ssec:ent-1}, we consider the special case    
    where the consequent $\psi'$ has a unique spatial atom; 
    this part reveals a delicate point which appears when the consequent 
    and the antecedent are $\hls{}$ atoms because of the constraint on the chunk sizes.
The general case is dealt with in Section~\ref{ssec:ent-all}; 
    the procedure calls the special case for the first atom of the consequent with 
    all the compatible prefixes of the antecedent, and 
	it does a recursive call for the remainders of the consequent and the antecedent. Note that to find all the compatible prefixes of the antecedent, some spatial atoms in the antecedent might be split into several ones. We derive the coNP upper bound from the aforementioned decision procedure as follows:  
\begin{enumerate}
\item The entailment problem is reduced to at most exponentially many 
\emph{ordered entailment problems} since there are exponentially many total preorders.
\item Each ordered entailment problem can be reduced further to exponentially many \emph{special ordered entailment problems} where there is one spatial atom in the consequent.
\item The original entailment problem is invalid iff there is an invalid special ordered entailment problem instance.
\item The special ordered entailment problem is in coNP.
\end{enumerate}

In the sequel, we assume that $\abs(\varphi)$ is satisfiable and $\abs(\varphi) \models \abs(\psi)$. Otherwise, the entailment is trivially unsatisfiable.

\subsection{Decomposition into ordered entailments}
\label{ssec:order}

Given the entailment problem $\varphi \models \psi$,
we denote by $\addr(\varphi)$ (and $\addr(\psi)$) 
	the set of terms used as start and end addresses of spatial atoms 
	in $\varphi$ (resp. $\psi$).
%
Recall that a \emph{preorder} $\preceq$ over a set $A$ 
	is a reflexive and transitive relation on $A$. 
	The preorder $\preceq$ on $A$ is \emph{total} if for every $a, b \in A$, 
	either $a \preceq b$ or $b \preceq a$. 
	For $a,b\in A$, we denote by $a \simeq b$ the fact that 
		$a \preceq b$ and $b \preceq a$,
	and we use $a \prec b$ for  $a \preceq b$ but not $b \preceq a$.

\begin{definition}[Total preorder compatible with $\abs(\varphi)$]
Let  $\preceq$ be a total preorder over $\addr(\varphi) \cup \addr(\psi)$. Then $\preceq$  is said to be \emph{compatible with}  $\varphi$ if
$C_\preceq \wedge \abs(\varphi)$ is satisfiable, where
\begin{align}\label{eq:start-end}
C_\preceq & \triangleq &
	\bigwedge_{t_1, t_2 \in \addr(\varphi) \cup \addr(\psi), t_1  \simeq  t_2} t_1 = t_2 \wedge \bigwedge_{t_1, t_2 \in \addr(\varphi) \cup \addr(\psi), t_1  \prec  t_2} t_1 < t_2.
\end{align}
\end{definition}
\begin{example}
Let $\varphi \equiv  \blk(x_1, x_2) \sepc \hls{}(x_2, x_3; y)$ and $\psi \equiv \blk(x_1, x_3)$. Then $\addr(\varphi) \cup \addr(\psi) = \{x_1, x_2, x_3\}$. From $\abs(\varphi) \models x_1 < x_2 \wedge x_2 \le x_3$, there are two total preorders compatible with $\varphi$, namely, $x_1 \prec_1 x_2 \prec_1 x_3$ and $x_1 \prec_2 x_2 \simeq_2 x_3$.
\end{example}

\begin{definition}[$\varphi \models_\preceq \psi$]
Let $\preceq$ be a total preorder over $\addr(\varphi) \cup \addr(\psi)$ that is compatible with $\varphi$. Then we say $\varphi \models_\preceq \psi$ if $C_\preceq \wedge \Pi: \Sigma \models \Pi': \Sigma'$. 
\end{definition}

\begin{lemma}\label{lem:split-prec}
$\varphi \models \psi$ iff 
for every total preorder $\preceq$ over $\addr(\varphi) \cup \addr(\psi)$ that is compatible with $\varphi$,
we have $\varphi \models_\preceq \psi$.
\end{lemma}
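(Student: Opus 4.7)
The plan is to prove both directions, with the forward one being immediate and the backward one requiring the construction of a canonical total preorder from any given model of $\varphi$.

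For the forward direction $(\Rightarrow)$, I would assume $\varphi \models \psi$ and let $\preceq$ be any total preorder on $\addr(\varphi) \cup \addr(\psi)$ compatible with $\varphi$. Any $(s,h)$ satisfying $C_\preceq \wedge \Pi : \Sigma$ satisfies in particular $\Pi : \Sigma$, i.e.\ $\varphi$ (recall from the opening of Section~\ref{sec:ent} that the existential quantifiers of $\varphi$ have been dropped), and hence by hypothesis $(s,h) \models \psi = \Pi' : \Sigma'$. This yields $\varphi \models_\preceq \psi$.

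For the backward direction $(\Leftarrow)$, I would fix a model $(s,h) \models \varphi$ and associate with it the total preorder $\preceq_s$ on $\addr(\varphi) \cup \addr(\psi)$ defined by $t_1 \preceq_s t_2$ iff $s(t_1) \le s(t_2)$. By construction $s \models C_{\preceq_s}$. To verify that $\preceq_s$ is compatible with $\varphi$, I would rely on the stack-level soundness of the abstraction of Section~\ref{sec:sat-hls}, namely $(s,h) \models \varphi \Rightarrow s \models \abs(\varphi)$; this is ensured by each ingredient of $\abs$---the constraint $t_1 < t_2$ for a block atom $\blk(t_1,t_2)$, the summary $\abs^+(\hls{}(\cdot))$ of Lemma~\ref{lem-hls}, and the disjointness constraints encoding $\sepc$---being a necessary condition on the stack whenever $\varphi$ admits a heap model over it. Hence $C_{\preceq_s} \wedge \abs(\varphi)$ is satisfied by $s$, so $\preceq_s$ is compatible with $\varphi$. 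The assumption then gives $\varphi \models_{\preceq_s} \psi$, and since $(s,h) \models C_{\preceq_s} \wedge \Pi : \Sigma$ I would conclude $(s,h) \models \psi$, proving $\varphi \models \psi$.

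The main delicate point is the stack-level soundness $(s,h) \models \varphi \Rightarrow s \models \abs(\varphi)$, a mild strengthening of the equi-satisfiability asserted in Proposition~\ref{prop-sat-correct}. I expect it to come for free from the definition of $\abs$ in Section~\ref{sec:sat-hls}, since each summary is a projection of the semantics of its spatial atom onto the stack; no additional machinery should be needed beyond the bookkeeping above.
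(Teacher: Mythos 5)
Your proof is correct and is exactly the argument the paper has in mind when it declares the lemma ``immediate'': the forward direction by weakening, and the backward direction by inducing the total preorder $\preceq_s$ from a model's stack, with compatibility supplied by the stack-level soundness $(s,h)\models\varphi \Rightarrow s\models\abs(\varphi)$, which is indeed established in the ``only if'' part of the proof of Proposition~\ref{prop-sat-correct}. No gaps.
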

The proof of the above lemma is immediate.
There may be exponentially many total preorders over $\addr(\varphi) \cup \addr(\psi)$ that are compatible with $\varphi$ in the worst case.


The procedure to decide $\varphi \models_\preceq \psi$ is presented in the rest of this section.
We assume that $\varphi \equiv \Pi: a_1 \sepc \cdots \sepc a_m$ and $\psi \equiv \Pi': b_1 \sepc \cdots \sepc b_n$ such that
\begin{align}
C_\preceq \wedge \abs(\varphi) \mbox{ is satisfiable and }
C_\preceq \wedge \abs(\varphi) \models \abs(\psi).
\label{eq:order}
\end{align}
We consider that the atoms $\hls{}(t_1, t_2; t_3)$ in $\varphi$ or $\psi$ such that $C_\preceq \models t_1 = t_2$ are removed because they correspond to an empty heap.
Moreover, after a renaming, 
we assume that the spatial atoms are sorted in each formula, namely, 
the following two \PbA\ entailments hold:
\begin{align}
C_\preceq \models & \bigwedge \limits_{i \in [1,m]} \atomhead(a_i) < \atomtail(a_i) \wedge \bigwedge \limits_{1 \le i < m}  \atomtail(a_i) \le  \atomhead(a_{i+1}),
\label{eq:order-ante} \\
C_\preceq \models & \bigwedge \limits_{i \in [1,n]} \atomhead(b_i) < \atomtail(b_i) \wedge \bigwedge \limits_{1 \le i < n}  \atomtail(b_i) \le  \atomhead(b_{i+1}).
\label{eq:order-conseq}
\end{align}

Section~\ref{ssec:ent-1} considers the special case of a consequent $\psi$ having only one spatial atom. Section~\ref{ssec:ent-all} considers the general case. 
%
%

\hide{
The following lemmas deal with the trivial cases of the entailment.


\begin{lemma}\label{lem:abs-unsat}
Let $\varphi \models \psi$ be an entailment problem.
If $\abs(\varphi)$ is unsatisfiable, then the entailment is (trivially) valid.
\end{lemma}


\begin{lemma}\label{lem:abs-ent}
Let $\varphi \models \psi$ be an entailment problem
such that $\abs(\varphi)$ is satisfiable.
If $\abs(\varphi) \models \abs(\psi)$ is invalid
then the entailment $\varphi \models \psi$ is invalid.
\end{lemma}

\begin{proof}
It is equivalent to show that if $\varphi \models \psi$ is valid, then $\abs(\varphi) \models  \abs(\psi)$ is valid. 
Suppose $\abs(\varphi)$ is satisfiable and $\varphi \models \psi$ is valid. Then there is an interpretation $s$ such that $s \models \abs(\varphi)$. 
From the proof of Proposition~\ref{prop-sat-correct}, we know that there are $h$ such that $s, h \models \varphi$. From $\varphi \models \psi$, we deduce that $s, h \models \psi$. From the proof of Proposition~\ref{prop-sat-correct} again, we know that $s \models \abs(\psi)$.  Therefore, $\abs(\varphi) \models \abs(\psi)$  is valid.
\qed
\end{proof}
}

%

\hide{
\input{entail-hls-order.tex}
}



\subsection{Consequent with one spatial atom}
\label{ssec:ent-1}

Consider the ordered entailment 
$\varphi \models_\preceq \psi$, where 
$\varphi \equiv \Pi: a_1 \sepc \cdots \sepc a_m$, 
$\psi \equiv \Pi': b_1$
and the constraints (\ref{eq:order})--(\ref{eq:order-conseq}) are satisfied.
From~(\ref{eq:order}) and the definition of $\abs$, we have that 
$C_\preceq \land \abs(\varphi)$ implies $\Pi'$, 
so we simplify this entailment to deciding:
\begin{align*}
C_\preceq \wedge \Pi: a_1 \sepc \cdots \sepc a_m \models_\preceq b_1,
\end{align*}
where, the atoms $a_i$ ($i\in [m]$) and $b_1$ represent \emph{non-empty} heaps,
and the start and end addresses of atoms $a_i$ as well as those of $b_1$ are totally ordered by $C_\preceq$.
  
Because $b_1$ defines a continuous memory region, 
the procedure checks the following necessary condition in \PbA: 
{\small
\begin{align*}
C_\preceq \models 
\atomhead(a_1) = \atomhead(b_1) \wedge \atomtail(a_m) = \atomtail(b_1) \wedge \bigwedge \limits_{1 \le i < m} \atomtail(a_i) = \atomhead(a_{i+1}).
\end{align*}
}
\noindent
Then, the procedure does a case analysis on the form of $b_1$.
If $b_1 \equiv t_1 \pto t_2$ 
	then $\varphi  \models_\preceq \psi$ holds iff $m = 1$ and $a_1 = t'_1 \pto t'_2$.
If $b_1 \equiv \blk(t_1, t_2)$ 
	then $\varphi  \models_\preceq \psi$ holds.
For the last case, 
$b_1 \equiv \hls{}(t_1, t_2; t_3)$, we distinguish between $m=1$ or not.

\subsubsection{One atom in the antecedent:} A case analysis on the form of $a_1$ follows.

\myfpar{$a_1 \equiv t'_1 \pto t'_2$}
	Then $\varphi  \models_\preceq \psi$ does not hold, since a nonempty heap 
	modeling $b_1$ has to contain at least two memory cells.

\myfpar{$a_1 \equiv \blk(t'_1, t'_2)$}
	Then the entailment $\varphi  \models_\preceq \psi$ does not hold
	because a memory block of size $t'_2-t'_1$
	where the first memory cell stores the value $1$ 
	satisfies $\blk(t'_1, t'_2)$
	but does not satisfy $b_1 \equiv \hls{}(t_1, t_2; t_3)$
	where, by the inductive rule of $\hls{}$, 
	$t_1 \pto z - t_1$ and $2 \le z - t_1 \le t_3$.

\myfpar{$a_1 \equiv \hls{}(t'_1, t'_2; t'_3)$}
	Then the entailment problem seems easy to solve. 
	One may conjecture that 
	$C_\preceq \wedge \Pi: \hls{} (t'_1, t'_2; t'_3) \models \hls{}(t_1, t_2; t_3)$ 
	iff $C_\preceq \wedge \abs(\varphi) \models t'_3 \le t_3$, 
	which is \emph{not} the case as a matter of fact, 
	as illustrated by the following example. 
(Recall that, from (\ref{eq:start-end}) we have that $C_\preceq \wedge \abs(\varphi) \models t'_1 = t_1 \wedge t'_2 = t_2$.)

\begin{example}
Consider $x < y \wedge y -x  =  4: \hls{}(x, y; 3) \models \hls{}(x, y; 2)$. The entailment is valid, while we have $3 > 2$.
The reason behind this seemly counterintuitive fact is that when we unfold $ \hls{}(x, y; 3)$ to meet the constraint $y - x = 4$, it is impossible to have a memory chunk of size $3$. (Actually every memory chunk is of size $2$ during the unfolding.) 
\end{example}

We are going to show how to tackle this issue in the sequel.

\begin{definition}[Unfolding scheme of a predicate atom and effective upper bound]
Let $\varphi \equiv \Pi: \hls{}(t'_1, t'_2; t'_3)$ be an {\slah} formula and $s: \cV \rightarrow \NN$ be a stack such that $s \models \abs(\varphi)$ and $s(t'_2) - s(t'_1) \ge 2$. 
An \emph{unfolding scheme} of $\varphi$ w.r.t. $s$ is a sequence of numbers
$(sz_1, \cdots, sz_\ell)$ such that $2 \le sz_i \le s(t'_3)$ for every $i \in [\ell]$ and $s(t'_2) = s(t'_1) + \sum_{i \in [\ell]} sz_i$. 
Moreover, $\max(sz_1, \cdots, sz_\ell)$ is called the \emph{chunk size upper bound} associated with the unfolding scheme.
The \emph{effective upper bound} of $\varphi$ w.r.t. $s$, denoted by $\eub_\varphi(s)$, is defined as the maximum chunk size upper bound associated with the unfolding schemes of $\varphi$ w.r.t. $s$.
\end{definition}

\begin{example}
Let $\varphi \equiv x < y: \hls{}(x, y; 3)$ and $s$ be a store such that $s(x)= 1$ and $s(y) = 7$. Then there are two unfolding schemes of $\varphi$ w.r.t. $s$, namely, $(2, 2, 2)$ and $(3,3)$, whose chunk size upper bounds are $2$ and $3$ respectively. Therefore, $\eub_\varphi(s)$, the effective upper bound of $\varphi$ w.r.t. $s$, is $3$.
\end{example}

The following lemma (proved in the appendix) states that 
the effective upper bounds of chunks in heap lists atoms of $\varphi$ 
with respect to stacks 
can be captured by a {\qfpa} formula.

\begin{lemma}\label{lem-eub}
For an {\slah} formula $\varphi \equiv \Pi: \hls{}(t'_1, t'_2; t'_3)$, a {\qfpa} formula $\xi_{eub,\varphi}(z)$ can be constructed in linear time such that for every store $s$ satisfying $s \models \abs(\varphi)$, we have $s[z \gets \eub_\varphi(s)] \models \xi_{eub,\varphi}(z)$ and $s[z \gets n] \not \models \xi_{eub,\varphi}(z)$ for all $n \neq \eub_\varphi(s)$.
\end{lemma}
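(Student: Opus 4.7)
The plan is to give an explicit closed-form for $\eub_\varphi(s)$ in terms of $L = s(t'_2) - s(t'_1)$ and $U = s(t'_3)$, and then simply transcribe the resulting case analysis into {\qfpa}. Since $\eub_\varphi(s)$ depends only on $L$ and $U$, it suffices to characterize, for each pair $(L, U)$ with $L \ge 2$ and $U \ge 2$, the largest $M \in [2, U]$ such that $L - M$ can be written as the sum of a (possibly empty) sequence of integers each lying in $[2, U]$; this largest such $M$ is exactly $\eub_\varphi(s)$, because any unfolding scheme of $\varphi$ w.r.t.\ $s$ with maximum chunk size $M$ corresponds to such an $M$ together with a decomposition of $L - M$.

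The key arithmetic fact I will use is that every integer $n \ge 2$ admits a decomposition as a sum of $2$'s and $3$'s (by induction: $n = 2, 3$ are trivial and for $n \ge 4$ one writes $n = 2 + (n-2)$). Consequently, when $U \ge 3$ the set of non-negative integers expressible as a sum of elements of $[2, U]$ is exactly $\{0\} \cup \{n \in \NN : n \ge 2\}$, whereas when $U = 2$ it is the set of non-negative even integers. This reduces the non-linear decomposability condition to a purely linear predicate.

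With this in hand I carry out the case analysis on $(L, U)$, assuming $L \ge 2$ (forced by $s \models \abs(\varphi)$ in the non-empty case). If $U = 2$, every chunk must have size $2$, so $L$ is even and $\eub = 2$. If $U \ge 3$ and $L \le U$, the single-chunk scheme of size $L$ is admissible, so $\eub = L$. If $U \ge 3$ and $L = U + 1$, the choice $M = U$ is infeasible because the residual $L - M = 1$ is not decomposable, while $M = U - 1$ leaves residual $2$ and is admissible, so $\eub = U - 1$. If $U \ge 3$ and $L \ge U + 2$, the choice $M = U$ leaves residual $L - U \ge 2$, which is decomposable, so $\eub = U$.

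The formula $\xi_{eub,\varphi}(z)$ is then the disjunction of the four conjunctions
\[
\bigl(t'_3 = 2 \land z = 2\bigr)
\lor \bigl(t'_3 \ge 3 \land t'_2 - t'_1 \le t'_3 \land z = t'_2 - t'_1\bigr)
\lor \bigl(t'_3 \ge 3 \land t'_2 - t'_1 = t'_3 + 1 \land z = t'_3 - 1\bigr)
\lor \bigl(t'_3 \ge 3 \land t'_2 - t'_1 \ge t'_3 + 2 \land z = t'_3\bigr),
\]
which is a {\qfpa} formula of size linear in $\varphi$ and built in linear time. The four guards are pairwise incompatible, so $\xi_{eub,\varphi}(z)$ determines $z$ uniquely for each $s \models \abs(\varphi)$ (with $L \ge 2$), establishing both conclusions of the lemma. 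The only genuine subtlety — and the main obstacle to a simpler formula — is the exceptional case $L = U + 1$: the naive guess $\eub = \min(L, U)$ fails there, and singling out this case correctly depends on the fact that $1$ is the unique positive integer not representable as a sum of elements of $[2, U]$ when $U \ge 3$.
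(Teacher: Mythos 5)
Your proof is correct, and it reaches the goal by a genuinely different route than the paper. The paper does not compute $\eub_\varphi(s)$ in closed form; instead it first writes a {\qfpa} predicate expressing ``$z$ is the size of some chunk in some unfolding scheme'' (namely $2 \le z \le t'_3 \wedge (2 \le t'_2 - t'_1 - z \vee t'_2 - t'_1 = z)$ when $t'_3 > 2$), conjoins a maximality condition $\forall z'.\, z < z' \le t'_3 \rightarrow \neg(\cdots)$, and then eliminates that bounded universal quantifier by observing that feasibility can only change at $z' = z+1$ and $z' = z+2$. You instead resolve the maximization by hand, obtaining the explicit value table $\eub = 2$ if $U=2$; $=L$ if $U\ge 3, L\le U$; $=U-1$ if $L=U+1$; $=U$ if $L\ge U+2$, and transcribe it directly. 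Both arguments rest on the same arithmetic fact (every $n\ge 2$ is a sum of $2$'s and $3$'s, so decomposability into $[2,U]$ for $U\ge 3$ reduces to ``$=0$ or $\ge 2$''), and your exceptional case $L=U+1$ is exactly what the paper's two residual instances $z'=z+1, z'=z+2$ are detecting. Your version buys a smaller, fully explicit formula and isolates the one non-obvious case; the paper's version is more mechanical (define feasibility, assert maximality, eliminate the quantifier) and needs no separate correctness check of a value table. Two cosmetic points: over $\NN$ you should render $z = t'_3 - 1$ as $z + 1 = t'_3$ and $t'_2 - t'_1 \le t'_3$ as $t'_2 \le t'_1 + t'_3$, etc.; and your formula, unlike the paper's, omits the conjuncts $\Pi$, $2 \le t'_2 - t'_1$ and the parity constraint in the $t'_3 = 2$ branch --- harmless for the lemma as stated, since it only quantifies over stores with $s \models \abs(\varphi)$, but worth noting if $\xi_{eub,\varphi}$ is ever used outside that hypothesis.
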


The following lemma (proof in the appendix) provides the correct test used 
	for the case $a_1 \equiv \hls{}(t'_1, t'_2; t'_3)$.
\begin{lemma}\label{lem-hls-hls}
Let $\varphi \equiv \Pi: \hls{}(t'_1, t'_2; t'_3)$, $\psi \equiv \hls{}(t_1, t_2; t_3)$, and $\preceq$ be a total preorder over $\addr(\varphi) \cup \addr(\psi)$ such that $C_\preceq \models t'_1 < t'_2 \wedge t'_1 = t_1 \wedge t'_2 = t_2$. 
Then $\varphi \models_\preceq \psi$ iff 
$C_\preceq \wedge \abs(\varphi)  \models  \forall z.\ \xi_{eub, \varphi}(z) \rightarrow z \le t_3$.  
\end{lemma}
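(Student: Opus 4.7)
\smallskip
\noindent\textbf{Proof plan for Lemma~\ref{lem-hls-hls}.}
The plan is to prove both implications via a semantic analysis of the unfolding schemes that a model of $\varphi$ can realize, together with the characterization of $\eub_\varphi$ given by Lemma~\ref{lem-eub}. The key observation is that (i) any heap satisfying $\hls{}(t'_1,t'_2;t'_3)$ w.r.t.\ a stack $s$ corresponds to exactly one unfolding scheme, whose chunk sizes lie in $[2, s(t'_3)]$ and sum to $s(t'_2)-s(t'_1)$, and (ii) conversely, any such scheme can be realized by a concrete heap (with arbitrary body content) satisfying $\hls{}(t'_1,t'_2;t'_3)$.

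\smallskip
\noindent\emph{$(\Leftarrow)$: soundness.} Assume $C_\preceq \wedge \abs(\varphi) \models \forall z.\ \xi_{eub,\varphi}(z) \to z \le t_3$. Take any $(s,h)$ with $s,h \models C_\preceq \wedge \Pi : \hls{}(t'_1,t'_2;t'_3)$. By Proposition~\ref{prop-sat-correct} we have $s \models \abs(\varphi)$, so the assumption together with Lemma~\ref{lem-eub} yields $\eub_\varphi(s) \le s(t_3)$. Since $C_\preceq \models t'_1 < t'_2$, the heap is nonempty and $h$ realizes a concrete unfolding scheme $(sz_1,\dots,sz_\ell)$ of $\hls{}(t'_1,t'_2;t'_3)$ w.r.t.\ $s$. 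By definition of $\eub_\varphi(s)$, we have $\max_i sz_i \le \eub_\varphi(s) \le s(t_3)$. Since moreover $2 \le sz_i$ for each $i$, $s(t'_1) = s(t_1)$ and $s(t'_2) = s(t_2)$, the same heap $h$ realizes an unfolding of $\hls{}(t_1,t_2;t_3)$, so $s,h \models \psi$.

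\smallskip
\noindent\emph{$(\Rightarrow)$: completeness via contrapositive.} Suppose $C_\preceq \wedge \abs(\varphi) \not\models \forall z.\ \xi_{eub,\varphi}(z) \to z \le t_3$. Then by Lemma~\ref{lem-eub} there exists a stack $s$ such that $s \models C_\preceq \wedge \abs(\varphi)$ and $\eub_\varphi(s) > s(t_3)$. Let $(sz_1,\dots,sz_\ell)$ be an unfolding scheme of $\varphi$ w.r.t.\ $s$ whose maximum equals $\eub_\varphi(s)$, and fix an index $j$ with $sz_j = \eub_\varphi(s) > s(t_3)$. Construct a heap $h$ by allocating $\ell$ consecutive chunks of sizes $sz_1,\dots,sz_\ell$ starting at $s(t'_1)$: at the head of the $i$-th chunk store the value $sz_i$, and fill the body with arbitrary values (e.g.\ $0$). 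Since $s \models \Pi$ (from $s \models \abs(\varphi)$), a straightforward induction on $\ell$ shows $s,h \models \Pi : \hls{}(t'_1,t'_2;t'_3)$, and since $s \models C_\preceq$, we get $s,h \models C_\preceq \wedge \varphi$. On the other hand, any heap satisfying $\hls{}(t_1,t_2;t_3)$ w.r.t.\ $s$ must be decomposable into chunks of sizes in $[2, s(t_3)]$; but because the points-to atoms of $h$ force the chunk boundaries to be exactly $s(t'_1), s(t'_1)+sz_1, \dots$, the $j$-th chunk has size $sz_j > s(t_3)$, so $s,h \not\models \hls{}(t_1,t_2;t_3) = \psi$. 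Hence $\varphi \not\models_\preceq \psi$.

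\smallskip
\noindent\emph{Main obstacle.} The delicate point is the realizability step in the completeness direction: one must argue that the chunk structure of the constructed $h$ is uniquely determined by the points-to atoms it contains, so that $h$ cannot be re-decomposed into chunks of sizes all bounded by $s(t_3)$. This follows because the head value of each chunk (equal to its size) is stored in the heap, forcing any valid decomposition of $h$ as a heap list to coincide with $(sz_1,\dots,sz_\ell)$; the rest of the argument is a routine appeal to Lemma~\ref{lem-eub} and the definition of $\eub_\varphi$.
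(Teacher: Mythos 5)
Your proposal is correct and follows essentially the same route as the paper's proof: both directions rest on Lemma~\ref{lem-eub}, on constructing a witness heap that realizes the unfolding scheme achieving $\eub_\varphi(s)$ with each chunk's size stored at its head, and on the observation that these stored head values force the chunk decomposition so that a heap with an oversized chunk cannot be re-decomposed to satisfy $\psi$. The only (purely cosmetic) difference is that you argue the completeness direction by contrapositive, whereas the paper assumes $\varphi \models_\preceq \psi$ and derives $\eub_\varphi(s) \le s(t_3)$ directly from the same witness heap.
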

From Lemma~\ref{lem-hls-hls}, it follows that $\varphi \equiv \Pi: \hls{}(t'_1, t'_2; t'_3) \models_\preceq \hls{}(t_1, t_2; t_3)$ is invalid iff $C_\preceq \wedge \abs(\varphi)  \wedge  \exists z.\ \xi_{eub, \varphi}(z) \wedge \neg z \le t_3$ is satisfiable, which is an {\EPbA} formula. Therefore, this special case of the ordered entailment problem is in coNP.

\medskip
\noindent {\bf At least two atoms in the antecedent:} Recall that 
$\varphi\equiv C_\preceq \wedge \Pi: a_1 \sepc \cdots \sepc a_m$; 
a case analysis on the form of the first atom of the antecedent, $a_1$, follows. 

\myfpar{$a_1\equiv \blk(t_1', t_2')$} 
	Then $\varphi \models_\preceq \hls{}(t_1, t_2; t_3)$ does not hold (see case $m=1$).

\myfpar{$a_1\equiv \hls{}(t_1', t_2'; t'_3)$}
	Then $\varphi \models_\preceq \hls{}(t_1, t_2; t_3)$ 
	 iff 
	 $\abs(\varphi): \hls{}(t'_1, t'_2; t'_3) \models_\preceq \hls{}(t_1, t'_2; t_3)$
	 and 
	 $\abs(\varphi): a_2 \sepc \cdots \sepc a_m \models_\preceq \hls{}(t'_2, t_2; t_3)$.

\myfpar{$a_1\equiv t'_1 \pto t'_2$}
	Then the analysis is more involved because we have to check that 
	$t'_2$ is indeed the size of the first chunk in $\hls{}(t_1, t_2; t_3)$
		(i.e., satisfies $2\leq t'_2\leq t_3$)
	and the address $t'_1+t'_2$, the end of the chunk starting at $t'_1=t_1$, 
        is the start of a heap list in the antecedent.
    The last condition leads to the following cases:
\begin{compactitem}
\item$t'_1+t'_2$ is the end of some $a_j$
	where $j \in [m]$ such that 
	$t'_1 + t'_2 = \atomtail(a_j) \wedge C_\preceq \wedge \abs(\varphi)$ is satisfiable.
	Then the following entailment shall hold: 
{\small
\begin{align*}
2 \leq t'_2\leq t_3 \land t'_1 + t'_2 = \atomtail(a_j) \land \abs(\varphi): 
   a_{j+1} \sepc \cdots \sepc a_m & \models_\preceq \hls{}(t'_1+t'_2, t_2; t_3).
\end{align*}
}
%
\item$t'_1+t'_2$ is inside a block atom $a_j$:
	where $j \in [m]$ such that $a_j\equiv \blk(t''_1, t''_2)$ and 
	$t''_1 < t'_1 + t'_2 < t''_2 \land \abs(\varphi)$ is satisfiable. 
	Then $\varphi \not\models_\preceq \psi$ because
	a block atom cannot match the head of a heap list in the consequent.

\item$t'_1+t'_2$ is inside a heap-list atom $a_j$ 
	where $j \in [m]$ such that $a_j\equiv \hls{}(t''_1, t''_2; t''_3)$, 
	$t''_1 < t'_1 + t'_2 < t''_2 \land \abs(\varphi)$ is satisfiable.
	Then the following ordered entailment
	stating that the suffix of the antecedent starting at $t_1'+t_2'$ 
	      matches the tail of the consequent,
	shall hold: 
\begin{align*}
2 \leq t'_2\leq t_3 \land t''_1 < t'_1 + t'_2 < t''_2 \land \abs(\varphi):~ & \\
	  \hls{}(t'_1 + t'_2, t''_2; t''_3) 
	  \sepc\ a_{j+1} \sepc \cdots \sepc a_m &
	  \models_\preceq \hls{}(t'_1+t'_2, t_2; t_3)
\end{align*}
%
     and the following formula, 
     expressing that $t_1'+t_2'$ is inside a block of a chunk in $a_j$,
     shall be unsatisfiable (since otherwise the remaining suffix of the antecedent will start by a block atom and cannot match a heap list):
\[\abs(
	 \begin{array}[t]{l} t''_1 \le  x' < t'_1 + t'_2 < x'' \le t''_2 \wedge 2 \le x'' - x' \le t''_3 \wedge C_\preceq \wedge \Pi : \\
	a_1 \sepc \cdots \sepc a_{j-1} \sepc {\sf Ufld}_{x', x''}(\hls{}(t''_1, t''_2; t''_3))\ \sepc a_{j+1} \sepc \cdots \sepc a_m)
	\end{array}
\]
      where $x', x''$ are two fresh variables and 
      the formula {\sf Ufld} specifies a splitting of $a_j$
      into a heap list from $t''_1$ to $x'$, a chunk starting at $x'$ and ending at $x''$,
      and a heap list starting at $x''$:
\begin{align*}
{\sf Ufld}_{x', x''}(\hls{}(t''_1, t''_2; t''_3)) \triangleq\  
           & \hls{}(t''_1, x'; t''_3) \sepc x' \pto x''- x' \sepc \\
           & \blk(x'+1, x'') \sepc \hls{}(x'', t''_2; t''_3),
\end{align*}
\end{compactitem}
Notice that all $j\in[m]$ shall be considered above; 
if one $j$ satisfying the premises does not lead to a valid conclusion
then the entailment is not valid.


	

\hide{
\item It is necessary that $t'_2$ denotes the size of the first unfolding of $\hls{}(t_1, t_2; t_3)$. Therefore, $\mathtt{matchAtom}$ checks whether $\abs(\varphi) \vDash c \le t'_2 \le t_3$, namely, the pure constraint in the recursive rule of $\hls{}$. If the answer is no, then the entailment does not hold. Otherwise, the procedure continues.

\item Because every unfolding of $\hls{}(t_1, t_2; t_3)$ starts with a points-to atom, we deduce that if the entailment holds, then the first unfolding of $\hls{}(t_1, t_2; t_3)$ ends at either $\atomtail(a_m)$, or at $\atomtail(a_k)$ for some $k < m$ satisfying that $a_{k+1}$ is a points-to or predicate atom. (Otherwise, $a_{k+1}$ is a block atom, then the points-to atom in the second unfolding of $\hls{}(t_1, t_2; t_3)$ would be matched to the first address in a block atom whose content is arbitrary, the entailment would not hold.) Let $S$ denote the set of such atoms. 
\item For each $a_k \in S$ such that $k < m$ and $\abs(\varphi) \wedge t'_1 + t'_2 = \atomtail(a_k)$ is satisfiable, we check whether $\abs(\varphi) \wedge t'_1 + t'_2 = \atomtail(a_k): a_{k+1} \sepc \cdots \sepc a_m \vDash \Pi': \hls{}(t'_1 + t'_2, t_2; t_3)$ holds. If the answer is no, then the entailment does not hold. (Note that if $\abs(\varphi) \wedge t'_1 + t'_2 = \atomtail(a_m)$ is satisfiable, then the entailment $\abs(\varphi) \wedge t'_1 + t'_2 = \atomtail(a_m): a_1 \sepc \cdots \sepc a_m \vDash \Pi': \hls{}(t_1, t_2; t_3)$ holds for sure.)
\item At last, if the procedure has not returned yet, report that the entailment holds.
}


\vspace{-2mm}


\subsection{Consequent with more spatial atoms}
\label{ssec:ent-all}

Using the arguments similar to the one given for the case $n=1$, 
we simplify $\varphi\models_\preceq \psi$ with
	$\varphi \equiv \Pi: a_1 \sepc \cdots \sepc a_m$ and 
	$\psi \equiv \Pi': b_1 \sepc \cdots \sepc b_n$ to:
\begin{align*}
C_\preceq \land \Pi: a_1 \sepc \cdots \sepc a_m \models_\preceq b_1\sepc \cdots \sepc b_n.
\end{align*}
\hide{
The entailment for the general case is solved by the procedure $\mathtt{checkSortedEntl}$ in Algorithm~\ref{algorithm:checkSortedEntl}, where the most technical part is the situation that $\abs(\varphi) \vDash \atomhead(a_k) < \atomtail(b_1)< \atomtail(a_k)$ for some $k$ and $a_k = \hls{}(t'_1, t'_2; t'_3)$.
}

For $n>1$, the decision procedure tries all the possible 
partitions of the sequence $a_1 \sepc \cdots \sepc a_m$ into 
a prefix $a_1 \sepc \cdots \sepc a_k'$ to be matched by $b_1$ and 
a suffix $a_k''\sepc \cdots \sepc a_m$ to be matched by $b_2 \sepc \cdots \sepc b_n$,
where $a_k'$ and $a_k''$ are obtained by splitting the atom $a_k$.
The partition process depends on the relative ordering of 
$\atomtail(b_1)$, $\atomhead(a_k)$ and $\atomtail(a_k)$. 
Formally, for every $k\in[m]$, the procedure considers all the following cases
for which it generates recursive calls to check the entailments:
\begin{align*}
C_\preceq \land \Pi_k: a_1 \sepc \cdots \sepc a_k' & \models_\preceq b_1  
\mbox{ and }C_\preceq \land \Pi_k: a_k'' \sepc \cdots \sepc a_m & \models_\preceq b_2 \sepc \cdots \sepc b_n,
\end{align*}
where $\Pi_k, a'_k, a''_k$ are defined as follows.
\begin{compactitem}
\item If $\atomtail(b_1) = \atomtail(a_k) \land C_\preceq \land \abs(\varphi)$ is satisfiable, 
	then $a_k'\triangleq  a_k$, $a_k''\triangleq  \emp$, 
	     $\Pi_k\triangleq \atomtail(b_1) = \atomtail(a_k) \land \abs(\varphi)$.
\item If $\atomhead(a_k) < \atomtail(b_1) < \atomtail(a_k) \land C_\preceq \land \abs(\varphi)$ is satisfiable, 
	  then a case analysis on the form of $a_k$ is done
	  to apply the suitable composition lemma:
\begin{compactitem}
\item If $a_k = \blk(t''_1, t''_2)$, then 
	$a_k'\triangleq\blk(t''_1, \atomtail(b_1))$,
	$a_k''\triangleq\blk(\atomtail(b_1), t''_2)$, and 
	$\Pi_k\triangleq t''_1 < \atomtail(b_1) < t''_2 \land \abs(\varphi)$.
\item If $a_k = \hls{}(t''_1, t''_2; t''_3)$, then we distinguish the following cases:
\begin{compactitem}
\item $\atomtail(b_1)$ starts a chunk in $a_k$, that is,
      $a_k' \triangleq \hls{}(t''_1, \atomtail(b_1); t''_3)$,
      $a_k''\triangleq \hls{}(\atomtail(b_1), t''_2; t''_3)$, and 
      $\Pi_k\triangleq t''_1 < \atomtail(b_1) < t''_2 \land \abs(\varphi)$;
      
\item $\atomtail(b_1)$ splits the body of a chunk in $a_k$ starting at some (fresh)
      address $x$; 
      depending on the position of the chunk in the list (the first, the last, in the middle, or the only chunk in the heap list), we obtain four cases. 
      This case splitting is due to the fact that the ordered entailment problems always assume non-empty $\hls{}$ atoms in the formulas.
      For example, if the chunk starting at $x$ is in the middle of the heap list, then 
      $a_k'\triangleq \hls{}(t''_1, x; t''_3) \sepc x \pto x' - x \sepc \blk(x+1, \atomtail(b_1))$,
      $a_k''\triangleq \blk(\atomtail(b_1), x') \sepc \hls{}(x', t''_2; t''_3)$, and
      $\Pi_k\triangleq t''_1 < x < \atomtail(b_1) < x' < t''_2 \wedge \abs(\varphi)$,
      where both $x$ and $x'$ are fresh variables.
\end{compactitem}
\end{compactitem}
\end{compactitem}

\hide{
The function $\mathtt{matchSeq}$ is the main entry in the translation.
Initially, it is called using  
$\mathtt{matchSeq}(C_\prec, a_1 \sepc \ldots \sepc a_m, b_1 \sepc \ldots \sepc b_n)$.
The calls to $\mathtt{matchSeq}$ (recursive or
	in $\mathtt{splitAtom}$ and $\mathtt{splitHls}$) 
	preserve the invariant that the first argument is a pure formula
	that entails the \PbA\ abstractions of the two spatial formulas 
	and fixes a total order between the equivalence classes of terms 
	representing start and end addresses of spatial atoms.
The first sub-case when $n \ge 2$ (Equation~(\ref{eq:mas-ind-tail})) 
    deals with the situation where the end of the atom $b_1$ is exactly 
    the end	of some atom $a_k$. Then the match is done between
    the sequence of atoms until $a_k$ and $b_1$ using $\mathtt{matchAtom}$
    and the remainder of the atoms in the entailment are dealt by
    the recursive call to $\mathtt{matchSeq}$.
The sequent sub-case (Equation~(\ref{eq:mas-ind-split})) corresponds to 
	the situation where the end of $b_1$ is ordered by $\Pi$ 
	between the start and end addresses of some atom $a_k$.
    In this case, the atom $a_k$ will be split to match $b_1$
    and the formula is generated by the function $\mathtt{splitAtom}$.
    Notice that the atom split $a_k$ can not be a points-to.
Because $\Pi$ fixes a total ordering, only one of the sub-cases
    listed will apply. An efficient algorithm would generate only this case
    by testing the condition introducing the sub-case.
    
\begin{figure}[tb]
\vspace{-2eX}
\begin{eqnarray}
\label{eq:mas-base}
\lefteqn{\mathtt{matchSeq}(\Pi, \bsepc_{i=1}^{m} a_i, b)
	\triangleq \mathtt{matchAtom}(\Pi, \bsepc_{i=1}^{m} a_i, b)}
	\\
\label{eq:mas-ind}
\lefteqn{\mathtt{matchSeq}(\Pi, \bsepc_{i=1}^{m} a_i, 
							    b_1 \sepc \bsepc_{j=2}^{n} b_j) } \\
\label{eq:mas-ind-tail}	 
   	& \triangleq & \bigwedge\limits_{k=1}^{m}
		\big(\ (\Pi \wedge \atomtail(a_k)=\atomtail(b_1)) \limp
   		(\begin{array}[t]{l}
		\mathtt{matchAtom}(\Pi, \bsepc_{i=1}^{k} a_i, b_1) \ \land \\
		\mathtt{matchSeq}(\Pi, \bsepc_{i=k+1}^{m} a_i,
			    \bsepc_{j=2}^{n} b_j)
		)\ \big) 
		\end{array}
	\\
\label{eq:mas-ind-split}	 
    & \land & \bigwedge\limits_{k=1}^{m} 
    \big(\ \begin{array}[t]{ll}
    	(\ \Pi \wedge \atomhead(a_k) < \atomtail(b_1) & <  \atomtail(a_k)\ )\limp 
	 \\      & \mathtt{splitAtom}(k, \Pi, \bsepc_{i=1}^{m} a_i, 
							    	 b_1 \sepc \bsepc_{j=2}^{n} b_j)\ \big)
		   \end{array}
\end{eqnarray}

\caption{Translation to \PbA\ for an ordered entailment with a general antecedent}
\label{alg:matchSeq}
\end{figure}

The function $\mathtt{splitAtom}$, defined in Figure~\ref{alg:splitAtom},
	is called with the index of the atom $a_k$ to be split,
	the pure formula giving the total order over addresses (initially $C_\prec$),
	and the full consequent. 
If the atom $a_k$ is a block, it is split until the end of $b_1$;
	the prefix of the sequence is matched to $b_1$ using $\mathtt{matchAtom}$,
	the remainder of $a_k$ and the suffix of the antecedent is matched with
	the tail of the consequent.
If the atom $a_k$ is a heap-list, the heap-list is unfolded 
	to situate $t=\atomtail(b_1)$:
\begin{itemize}
\item only one chunk is in the heap list and $t$ is inside its block
		(call $\mathtt{splitHlsOne}$),
\item $t$ is inside the first chunk of a heap list having at least two chunks
		(call $\mathtt{splitHlsHead}$),
\item $t$ is inside the last chunk of a heap list having at least two chunks
		(call $\mathtt{splitHlsLast}$),
\item $t$ is inside a chunk in the middle of a heap list having at least three chunks
		(call $\mathtt{splitHlsMiddle}$).
\end{itemize}
Each aforementioned function generates the \PbA\ constraint 
		that encodes the situation and the total order over 
		the new generated addresses.
	The definitions of these functions are given in Figure~\ref{alg:splitHls}.
Consider the definition of $\mathtt{splitHlsHead}$, the other cases are similar.
It has to check that the $a_k\equiv\hls{}(t_1',t_2';t_3')$ contains at
least two chunks and for that it defines $\Pi'$ using a fresh variable $x$ to
denote the start address of the second chunk in the list.
Such fresh variables are introduced in the antecedent, so the invariant concerning the free variables of the consequent is preserved.
The condition also includes the \PbA\ abstraction 
	of the newly introduced predicate atom, $\hls{}(x,t_2';t_3')$, 
	as well as a total ordering of addresses
	to preserve the invariant for the calls to 
	$\mathtt{matchAtom}$ and $\mathtt{matchSeq}$.
The condition $\Pi'$ is further split in two cases depending on
    the place of $\atomtail(b_1)$ inside the chunk starting at $t_1'$:
	(i) after the header or (ii) inside the chunk's block.
For each case, the function calls $\mathtt{matchAtom}$ on the 
	antecedent's prefix
    (atoms $a_i$ before $a_k$), the part of the heap-list before $x$
    and the part of the chunk before $\atomtail(b_1)$,
    and $\mathtt{matchSeq}$ on the rest of the chunk, and the 
    antecedent's suffix.

\paragraph*{Complexity analysis of the decision procedure.} The decomposition process produces at most exponentially many equivalence relations and total orders. The function $\mathtt{matchSeq}$ is used to solve the ordered entailment problem $\varphi_\prec \models \psi_\prec$ for each total order $\prec$. $\mathtt{matchSeq}$ calls the ${\tt matchAtom}$ function and all such calls occur positively in it. Moreover, ${\tt matchAtom}$ may contain universal quantifiers. Therefore, $\mathtt{matchSeq}(\Pi, \bsepc_{i=1}^{m} a_i, b)$ may contain universal quantifiers, thus in the $\sf \Pi_1$-fragment of \PbA. The size of $\mathtt{matchSeq}(\Pi, \bsepc_{i=1}^{m} a_i, b)$ is at most exponential with respect to the size of $\varphi$ and $\psi$. Since the satisfiability of the $\sf \Pi_1$-fragment of \PbA\ can be solved in Co-NP, it follows that the entailment problem of \slah\ can be solved in double exponential time, i.e. it is in 2-EXPTIME.
  
\hide{
In this situation, we will unfold $\hls{}(t'_1, t'_2; t'_3)$ and consider different situations of the relative positions of $\atomtail(b_1)$ between $t'_1$ and $t'_2$ (see Algorithm~\ref{algorithm:hlsUnfold}). Specifically, we distinguish between the following situations. 
\begin{itemize}
\item $\hls{}(t'_1, t'_2; t'_3)$ is unfolded once,
\item $\hls{}(t'_1, t'_2; t'_3)$ is unfolded at least twice and $\atomtail(b_1)$ belongs to the first unfolding,
\item $\hls{}(t'_1, t'_2; t'_3)$ is unfolded at least twice and $\atomtail(b_1)$ belongs to the last unfolding,
\item $\hls{}(t'_1, t'_2; t'_3)$ is unfolded at least three times and $\atomtail(b_1)$ belongs to some non-first-non-last unfolding.
\end{itemize}
}

\begin{figure}[htb]
\vspace{-2eX}
\begin{eqnarray}
\label{eq:sa-blk}
\lefteqn{\mathtt{splitAtom}(k, \Pi, \bsepc_{i=1}^{k-1} a_i \sepc \blk(t_1',t_2') \sepc \bsepc_{i=k+1}^{m}a_{i}, b_1 \sepc \bsepc_{j=2}^{n} b_j)} 
\\
\nonumber
	& \triangleq & \mathtt{matchAtom}(\Pi, \bsepc_{i=1}^{k-1} a_i \sepc \blk(t_1',\atomtail(b_1)), b_1) 
	\\
\nonumber
	& \land & \mathtt{matchSeq}(\Pi, \blk(\atomtail(b_1),t_2')\sepc \bsepc_{i=k+1}^{m}a_{i}, \bsepc_{j=2}^{n} b_j)
	\\[2mm]
\label{eq:sa-hls}
\lefteqn{\mathtt{splitAtom}(k, \Pi, \bsepc_{i=1}^{k-1} a_i \sepc  \hls{}(t_1',t_2';t_3') \sepc \bsepc_{i=k+1}^{m}a_{i}, b_1 \sepc \bsepc_{j=2}^{n} b_j)} 
\\
\nonumber
   	& \triangleq & 
		\mathtt{splitHlsOne}(k, \Pi, \bsepc_{i=1}^{k-1} a_i \sepc  \hls{}(t_1',t_2';t_3') \sepc \bsepc_{i=k+1}^{m}a_{i}, b_1 \sepc \bsepc_{j=2}^{n} b_j)
	\\
\nonumber
	& \land & 
		\mathtt{splitHlsHead}(k, \Pi, \bsepc_{i=1}^{k-1} a_i \sepc  \hls{}(t_1',t_2';t_3') \sepc \bsepc_{i=k+1}^{m}a_{i}, b_1 \sepc \bsepc_{j=2}^{n} b_j)	
	\\
\nonumber
	& \land & 
		\mathtt{splitHlsLast}(k, \Pi, \bsepc_{i=1}^{k-1} a_i \sepc  \hls{}(t_1',t_2';t_3') \sepc \bsepc_{i=k+1}^{m}a_{i}, b_1 \sepc \bsepc_{j=2}^{n} b_j)	
	\\
\nonumber
	& \land & 
		\mathtt{splitHlsMiddle}(k, \Pi, \bsepc_{i=1}^{k-1} a_i \sepc  \hls{}(t_1',t_2';t_3') \sepc \bsepc_{i=k+1}^{m}a_{i}, b_1 \sepc \bsepc_{j=2}^{n} b_j)	
\end{eqnarray}

\caption{Translation to \PbA\ for splitting an atom of the antecedent to match the first atom of the consequent}
\label{alg:splitAtom}
\end{figure}

\begin{figure}[htbp]
\vspace{-2eX}
\begin{eqnarray}
\label{eq:shls-once}
\lefteqn{\mathtt{splitHlsOnce}(k, \Pi, \bsepc_{i=1}^{k-1} a_i \sepc \hls{}(t_1',t_2';t_3') \sepc \bsepc_{i=k+1}^{m}a_{i}, b_1 \sepc \bsepc_{j=2}^{n} b_j)} 
\\
\nonumber
	& \triangleq & \mbox{let }\begin{array}[t]{lcl}
	\Pi'& = &\Pi \land\ 2 \leq t_2' - t_1' \le t_3', \\
	\Pi'_1 & = & \Pi' \land\ \atomtail(b_1) = t_1'+1, 
	\Pi'_2 \ =\  \Pi' \land\ t_1'+1 < \atomtail(b_1) < t_2' \mbox{ in}
	\end{array}  \\
\nonumber
	& & \big(\ \Pi_1' \limp\ (\begin{array}[t]{ll}
        & \mathtt{matchAtom}(\Pi'_1, 
        			\bsepc_{i=1}^{k-1} a_i \sepc t_1'\pto t_2'-t_1', 
					b_1) \\
  \land & \mathtt{matchSeq}(\Pi'_1,
    				\blk(t_1'+1,t_2')\sepc \bsepc_{i=k+1}^{m}a_{i}, 
					\bsepc_{j=2}^{n} b_j) \ )\ \big)
	\end{array}
    \\
\nonumber
	& \land & \big(\ \Pi_2' \limp\ (\begin{array}[t]{ll}
		& \mathtt{matchAtom}(\Pi_2', 
					\bsepc_{i=1}^{k-1} a_i \sepc t_1'\pto t_2'-t_1' \sepc \blk(t_1'+1,\atomtail(b_1)), 
					b_1)  \\
  \land & \mathtt{matchSeq}(\Pi_2',
  					\blk(\atomtail(b_1),t_2')\sepc \bsepc_{i=k+1}^{m}a_{i},
					 \bsepc_{j=2}^{n} b_j) \ )\ \big)
	\end{array}
    \\[2mm]
\label{eq:shls-head}
\lefteqn{\mathtt{splitHlsHead}(k, \Pi, \bsepc_{i=1}^{k-1} a_i \sepc  \hls{}(t_1',t_2';t_3') \sepc \bsepc_{i=k+1}^{m}a_{i}, b_1 \sepc \bsepc_{j=2}^{n} b_j)} 
\\
\nonumber
   	& \triangleq & \mbox{let }\begin{array}[t]{lcl}
		\Pi'&=&\Pi \land\ 2 \leq x - t_1' \le t_3' \land\ \abs(\hls{}(x,t_2';t_3')), 
		\ x\mbox{ fresh}, \\
		\Pi_1'&=&\Pi' \land\ \atomtail(b_1) = t_1'+1, 
	    \Pi'_2\ =\ \Pi' \land\ t_1'+1 < \atomtail(b_1) < x \mbox{ in}
	\end{array}  \\
\nonumber
	& & \big(\ \Pi_1' \limp (\begin{array}[t]{ll}
        & \mathtt{matchAtom}(\Pi'_1, 
        			\bsepc_{i=1}^{k-1} a_i \sepc t_1'\pto x-t_1', 
					b_1) \\
  \land & \mathtt{matchSeq}(\Pi'_1,
    				\blk(t_1'+1,x)\sepc \hls{}(x,t_2';t'_3)\sepc \bsepc_{i=k+1}^{m}a_{i}, 
					\bsepc_{j=2}^{n} b_j) \ )\ \big)
	\end{array}
    \\
\nonumber
	& \land & \big(\ \Pi_2' \limp\ (\begin{array}[t]{ll}
		& \mathtt{matchAtom}(\Pi_2', 
					\bsepc_{i=1}^{k-1} a_i \sepc t_1'\pto x-t_1' \sepc \blk(t_1'+1,\atomtail(b_1)), 
					b_1)  \\
  \land & \mathtt{matchSeq}(\Pi_2',
  					\blk(\atomtail(b_1),x)\sepc \hls{}(x,t_2';t'_3) \sepc \bsepc_{i=k+1}^{m}a_{i},
					 \bsepc_{j=2}^{n} b_j) \ )\ \big)
	\end{array}
    \\[2mm]
\label{eq:shls-tail}
\lefteqn{\mathtt{splitHlsTail}(k, \Pi, \bsepc_{i=1}^{k-1} a_i \sepc  \hls{}(t_1',t_2';t_3') \sepc \bsepc_{i=k+1}^{m}a_{i}, b_1 \sepc \bsepc_{j=2}^{n} b_j)} 
\\
\nonumber
   	& \triangleq & \mbox{let }\begin{array}[t]{lcl}
		\Pi'&=&\Pi \land\ 2 \leq t_2' - x \le t_3' \land\ \abs(\hls{}(t_1',x;t_3')),
		\ x\mbox{ fresh}, 
		\\
		\Pi_1'&=&\Pi' \land\ \atomtail(b_1) = x+1, 
	    \Pi'_2\ =\ \Pi' \land\ x+1 < \atomtail(b_1) < t_2' \mbox{ in}
	\end{array}  \\
\nonumber
	& & \big(\ \Pi_1' \limp\ (\begin{array}[t]{ll}
        & \mathtt{matchAtom}(\Pi'_1, 
        			\bsepc_{i=1}^{k-1} a_i \sepc \hls{}(t_1',x;t_3') \sepc x\pto t_2'-x, 
					b_1) \\
  \land & \mathtt{matchSeq}(\Pi'_1,
    				\blk(x+1,t_2')\sepc \bsepc_{i=k+1}^{m}a_{i}, 
					\bsepc_{j=2}^{n} b_j) \ )\ \big)
	\end{array}
    \\
\nonumber
	& \land & \big(\ \Pi_2' \limp (\begin{array}[t]{ll}
		& \mathtt{matchAtom}(\Pi_2', 
					\left(\begin{array}[c]{l}
					\bsepc_{i=1}^{k-1} a_i \sepc \hls{}(t_1',x;t_3') \\
					\sepc\ x\pto t_2'-x \sepc\ \blk(x+1,\atomtail(b_1))
					\end{array}\right) 
					, b_1)  \\
  \land & \mathtt{matchSeq}(\Pi_2',
  					\blk(\atomtail(b_1),t_2')\sepc \bsepc_{i=k+1}^{m}a_{i},
					 \bsepc_{j=2}^{n} b_j) \ )\ \big)
	\end{array}
    \\[2mm]
\label{eq:shls-mid}
\lefteqn{\mathtt{splitHlsMiddle}(k, \Pi, \bsepc_{i=1}^{k-1} a_i \sepc  \hls{}(t_1',t_2';t_3') \sepc \bsepc_{i=k+1}^{m}a_{i}, b_1 \sepc \bsepc_{j=2}^{n} b_j)} 
\\
\nonumber
   	& \triangleq & \mbox{let }\begin{array}[t]{lcl}
		\Pi'&=&\Pi \land\ 2 \leq x' - x \le t_3' 
		           \land\ \abs(\hls{}(t_1',x;t_3'))
		           \land\ \abs(\hls{}(x',t_2';t_3')), \\
		&&\ x,x'\mbox{ fresh}, 
		\\
		\Pi_1'&=&\Pi' \land\ \atomtail(b_1) = x+1, 
	    \Pi'_2\ =\ \Pi' \land\ x+1 < \atomtail(b_1) < x' \mbox{ in}
	\end{array}  \\
\nonumber
	& & \big(\ \Pi_1' \limp\ (\begin{array}[t]{ll}
        & \mathtt{matchAtom}(\Pi'_1, 
        			\bsepc_{i=1}^{k-1} a_i \sepc \hls{}(t_1',x;t_3') \sepc x\pto x'-x, 
					b_1) \\
  \land & \mathtt{matchSeq}(\Pi'_1,
    				\blk(x+1,x')\sepc \hls{}(x',t_2';t_3')\sepc \bsepc_{i=k+1}^{m}a_{i}, 
					\bsepc_{j=2}^{n} b_j) \ )\ \big)
	\end{array}
    \\
\nonumber
	& \land & \big(\ \Pi_2' \limp (\begin{array}[t]{ll}
		& \mathtt{matchAtom}(\Pi_2', \left(\begin{array}[c]{l}
					\bsepc_{i=1}^{k-1} a_i \sepc \hls{}(t_1',x;t_3') \\
					\sepc\ x\pto x'-x \sepc\ \blk(x+1,\atomtail(b_1))
					\end{array}\right) 
					, b_1)  \\
  \land & \mathtt{matchSeq}(\Pi_2',
  					\blk(\atomtail(b_1),x')\sepc \hls{}(x',t_2';t_3') \sepc \bsepc_{i=k+1}^{m}a_{i},
					 \bsepc_{j=2}^{n} b_j) \ )\ \big)
	\end{array}
\end{eqnarray}

\caption{Translation to \PbA\ for splitting an $\hls{}$ atom of the antecedent}
\label{alg:splitHls}
\vspace{-2eX}
\end{figure}
}

\hide{
In Algorithm~\ref{algorithm:hlsUnfold}, we only present the details of the aforementioned first two situations, and omit the details of the last two situations since they are similar.
In the sequel, to help understanding, we are going to explain the second situation in Algorithm~\ref{algorithm:hlsUnfold}, namely, $ \hls{}(t'_1, t'_2; t'_3)$ is unfolded at least twice and $\atomtail(b_1)$ belongs to the first unfolding. In this situation, the $\mathtt{hlsUnfold}$ procedure works as follows.
\begin{enumerate}
\item At first, $\hls{}(t'_1, t'_2; t'_3)$ is unfolded into $t'_1 \pto x' - t'_1 \sepc \blk(t'_1+1, x') \sepc \hls{}(x', t'_2; t'_3)$, where $x'$ is a freshly introduced variable. Let $\varphi'_2$ be the  formula obtained by the unfolding.
\item Then we distinguish between whether $\atomtail(b_1) = t'_1+1$ or $t'_1 + 1 < \atomtail(b_1) < x'$.
\begin{itemize}
\item If $\abs(\varphi'_2) \wedge \atomtail(b_1) = t'_1 + 1$ is satisfiable, then check whether 
$$\abs(\varphi'_2)  \wedge \atomtail(b_1) = t'_1+1: a_1 \sepc \cdots \sepc a_{k-1} \sepc t'_1 \pto x' - t'_1 \vDash \Pi': b_1$$ 
and 
$$
\small
\begin{array}{l}
\abs(\varphi'_2) \wedge \atomtail(b_1) = t'_1+1:  \blk(t'_1+1, x') \sepc  \hls{}(x', t'_2; t'_3) \sepc a_{k+1} \sepc \cdots \sepc a_m \\
\hfill \vDash
 \Pi': b_2 \sepc \cdots \sepc b_n
\end{array}
$$ 

hold. If either of them does not hold, then the entailment does not hold.
\item If $\abs(\varphi'_2) \wedge t'_1 + 1 < \atomtail(b_1) < x'$ is satisfiable, then check whether 
$$
\small
\begin{array}{l}
\abs(\varphi'_2)  \wedge t'_1 +1<  \atomtail(b_1) < x': a_1 \sepc \cdots \sepc a_{k-1} \sepc t'_1 \pto x' - t'_1 \\
\hfill \sepc\ \blk(t'_1+1, \atomtail(b_1)) \vDash \Pi': b_1
\end{array}
$$ 
and 
$$
\small
\begin{array}{l}
\abs(\varphi'_2) \wedge t'_1 +1 <  \atomtail(b_1) < x':  \blk(\atomtail(b_1), x') \sepc \hls{}(x', t'_2; t'_3) \\
\hfill \sepc\ a_{k+1} \sepc \cdots \sepc a_m \vDash \Pi': b_2 \sepc \cdots b_n
\end{array}
$$ 
hold. If either of them does not hold, then the entailment does not hold.
\end{itemize}
\end{enumerate}
}


\hide{
	\begin{algorithm}\label{algorithm:matchAtom}
		\small
		\SetKw{false}{false}
		\SetKw{true}{true}
		\SetKwProg{Fn}{Function}{}{end}
		\SetKwFunction{matchAtom}{matchAtom}
		\caption{Decide whether $\varphi \equiv \Pi: a_1 \sepc \cdots \sepc a_m \vDash \Pi': b_1$ such that $m \ge 1$ and $\abs(\varphi) \vDash \bigwedge \limits_{1 \le i < j \le m} \atomhead(a_i) < \atomhead(a_{j})$.}
		\Fn{\matchAtom{$\varphi \equiv \Pi: a_1 \sepc \cdots \sepc a_m$,$\Pi': b_1$}}{
			\If{$\abs(\varphi) \vDash \atomhead(b_1) = \atomhead(a_1) \wedge \atomtail(b_1) = \atomtail(a_m) \wedge \bigwedge \limits_{i=1}^{m-1} \atomtail(a_i) =\atomhead(a_{i+1})$}{
				\Switch{$b_1$}{
				\uCase{$t_1 \pto t_2$}
				{\lIf{$m=1 \wedge a_1 \equiv t'_1 \pto t'_2$}{\KwRet{\true}}\lElse{\KwRet{\false}} }
				\lCase{$\blk(t_1, t_2)$}{\KwRet{\true}}
				\uCase{$\hls{}(t_1, t_2; t_3)$}
				{
				\Switch{$a_1$}{
					\lCase{$\blk(t'_1, t'_2)$}{\KwRet{\false}}
					\uCase{$\hls{}(t'_1, t'_2; t'_3)$}{
						\If{$\abs(\varphi)\wedge t_2 = t'_2$ is satisfiable}{
							\leIf{$m>1$}{\KwRet{\false}}{\KwRet{\true}}
						}
						\If{$\abs(\varphi) \wedge 0 < t_2 - t'_2 < c$ is satisfiable}{
							\tcc{In this case, $a_2 \sepc \cdots \sepc a_m$ is insufficient for at least one unfolding of $\hls{}$. Recall that $c \ge 2$ occurs in the recursive rule of $\hls{}$.}
							\KwRet{\false}\;
						}
						\If{$\abs(\varphi) \wedge  t_2 - t'_2 \ge c$ is satisfiable}{
						\tcc{In this case, $m > 1$ since $\abs(\varphi) \vDash t_2 = \atomtail(a_m) > t'_2$}
							\KwRet{\matchAtom{$\Pi \wedge t_2 - t'_2 \ge c: a_2 \sepc \cdots \sepc a_m, \Pi': \hls{}(t'_2, t_2; t_3)$}}\;
						}
					}
					\uCase{$t'_1 \mapsto t'_2$}{
							\uIf{$\abs(\varphi)\vDash c \le t'_2 \le t_3$}{
								$S:=\{a_k\mid 2 \le k \le m-1,a_{k+1} \mbox{ is a } \hls{} \mbox{ or } \pto \mbox{ atom}\}\cup\{a_m\}$\;
								\uIf{$\abs(\varphi) \vDash \bigvee\limits_{a_k\in S} t'_1 + t'_2 = \atomtail(a_k)$}{
									\ForEach{$a_k$ in $S$ such that $k < m$}{
										\uIf{$\abs(\varphi)\wedge t'_1 + t'_2 = \atomtail(a_k)$ is satisfiable}{ 
										\uIf{not \matchAtom{$\abs(\varphi) \wedge t'_1+ t'_2 = \atomtail(a_k):  a_{k+1}\sepc \cdots \sepc a_m, \Pi': \hls{}(t'_1+t'_2, t_2; t_3)$}}													{\KwRet{\false}}
									}
									}
									\KwRet{\true}\;
								}
								\lElse{\KwRet{\false}}
							}
							\lElse{\KwRet{\false}}
						}
				}
				}
			}
		}
		\lElse{\KwRet{\false}}
		}
	\end{algorithm}
}

\hide{
	\begin{algorithm}\label{algorithm:matchHls}
		\small
		\SetKw{false}{false}
		\SetKw{true}{true}
		\SetKwProg{Fn}{Function}{}{end}
		\SetKwFunction{matchHls}{matchHls}
		\caption{Decide whether $\varphi \equiv \Pi: a_1 \sepc \cdots \sepc a_m \vDash \Pi': \hls{}(t_1, t_2; t_3)$ such that $m \ge 1$ and $\abs(\varphi) \vDash \bigwedge \limits_{1 \le i < j \le m} \atomhead(a_i) < \atomhead(a_{j})$.}
		\Fn{\matchHls{$\varphi \equiv \Pi: a_1 \sepc \cdots \sepc a_m$,$\Pi': \hls{}(t_1; t_2; t_3)$}}{
			\If{$\abs(\varphi) \vDash t_1 = \atomhead(a_1) \wedge t_2 = \atomtail(a_m) \wedge \bigwedge \limits_{i=1}^{m-1} \atomtail(a_i) =\atomhead(a_{i+1})$}{
				\Switch{$a_1$}{
					\lCase{$\blk(t'_1, t'_2)$}{\KwRet{\false}}
					\uCase{$\hls{}(t'_1, t'_2; t'_3)$}{
						\If{$\abs(\varphi)\wedge t_2 = t'_2$ is satisfiable}{
							\leIf{$m>1$}{\KwRet{\false}}{\KwRet{\true}}
						}
						\If{$\abs(\varphi) \wedge 0 < t_2 - t'_2 < c$ is satisfiable}{
							\tcc{In this case, $a_2 \sepc \cdots \sepc a_m$ is insufficient for at least one unfolding of $\hls{}$. Recall that $c \ge 2$ occurs in the recursive rule of $\hls{}$.}
							\KwRet{\false}\;
						}
						\If{$\abs(\varphi) \wedge  t_2 - t'_2 \ge c$ is satisfiable}{
						\tcc{In this case, $m > 1$ since $\abs(\varphi) \vDash t_2 = \atomtail(a_m) > t'_2$}
							\KwRet{\matchHls{$\Pi \wedge t_2 - t'_2 \ge c: a_2 \sepc \cdots \sepc a_m, \Pi': \hls{}(t'_2, t_2; t_3)$}}\;
						}
					}
					\uCase{$t'_1 \mapsto t'_2$}{
							\uIf{$\abs(\varphi)\vDash c \le t'_2 \le t_3$}{
								$S:=\{a_k\mid 2 \le k \le m-1,a_{k+1} \mbox{ is a } \hls{} \mbox{ or } \pto \mbox{ atom}\}\cup\{a_m\}$\;
								\uIf{$\abs(\varphi) \vDash \bigvee\limits_{a_k\in S} t'_1 + t'_2 = \atomtail(a_k)$}{
									\ForEach{$a_k$ in $S$ such that $k < m$}{
										\uIf{$\abs(\varphi)\wedge t'_1 + t'_2 = \atomtail(a_k)$ is satisfiable}{ 
										\uIf{not \matchHls{$\abs(\varphi) \wedge t'_1+ t'_2 = \atomtail(a_k):  a_{k+1}\sepc \cdots \sepc a_m, \Pi': \hls{}(t'_1+t'_2, t_2; t_3)$}}													{\KwRet{\false}}
									}
									}
									\KwRet{\true}\;
								}
								\lElse{\KwRet{\false}}
							}
							\lElse{\KwRet{\false}}
						}
				}
			}\lElse{\KwRet{\false}}
		}
	\end{algorithm}
}
\hide{
	\begin{algorithm}\label{algorithm:checkSortedEntl}
		\small
		\SetKw{false}{false}
		\SetKw{true}{true}
		\SetKwProg{Fn}{Function}{}{end}
		\SetKwFunction{checkSortedEntl}{checkSortedEntl}
		\SetKwFunction{matchAtom}{matchAtom}
		\SetKwFunction{hlsUnfold}{hlsUnfold}
		\caption{Decide whether $\varphi \equiv \Pi: a_1 \sepc \cdots \sepc a_m \vDash \psi \equiv \Pi': b_1 \sepc \cdots \sepc b_n$, where $\abs(\varphi) \models (t_1 = t_2) \vee (t_1 < t_2) \vee (t_2 < t_1)$ for every $t_1, t_2 \in  \mathbb{A}_{\varphi, \psi}$.}
		\Fn{\checkSortedEntl{$\Pi: a_1 \sepc \cdots \sepc a_m$,$\Pi': b_1 \sepc \cdots \sepc b_n$}}{
		\lIf{$n=1$}{\KwRet{\matchAtom{$\Pi: a_1 \sepc \cdots \sepc a_m$, $\Pi': b_1$}}}
		\uIf{$\abs(\varphi) \vDash \atomtail(b_1) = \atomtail(a_k)$ for some $k$}{
				\lIf{not \matchAtom{$\abs(\varphi) : a_1 \sepc \cdots \sepc a_k$, $\Pi': b_1$}}{
					\KwRet{\false}
				}
					\KwRet{\checkSortedEntl{$\abs(\varphi): a_{k+1} \sepc \cdots \sepc a_m$, $\Pi': b_2 \sepc \cdots \sepc b_n$}}\;
			}\uElseIf{$\abs(\varphi) \vDash \atomhead(a_k) < \atomtail(b_1)< \atomtail(a_k)$ for some $k$}{
			\tcc{in this case, $a_k$ is either a block or predicate atom.}
				\lIf{$b_1 \equiv t_1 \pto t_2$}{\KwRet{\false}}
				\Else
				{
					\tcc{$b_1$ is a block or predicate atom}
					$res := \true$\;
					\If{$a_k = \blk(t'_1, t'_2)$}{
						 \lIf{not \matchAtom{$\abs(\varphi): a_1 \sepc \cdots \sepc \blk(t'_1, \atomtail(b_1))$, $\Pi': b_1$}}{$res:=\false$}
						 \lIf{not \checkSortedEntl{$\abs(\varphi): \blk(\atomtail(b_1), t'_2) \sepc a_{k+1} \sepc \cdots \sepc a_m, \Pi': b_2 \sepc \cdots b_n$}}{$res: = \false$}
					}
					\ElseIf{$a_k = \hls{}(t'_1, t'_2; t'_3)$}
					{
						\tcc{we will unfold $\hls{}(t'_1, t'_2; t'_3)$ and consider different cases of the relative positions of $\atomtail(b_1)$ between $t'_1$ and $t'_2$.}
						$res:=$ \hlsUnfold{$\Pi$, $a_1 \sepc \cdots \sepc a_{k-1}$, $\hls{}(t'_1, t'_2; t'_3)$, $a_{k+1} \sepc \cdots \sepc a_m$, $\Pi': b_1 \sepc \cdots b_n$}\;
					}
					\KwRet{res}\;
				}
				}
			}
	\end{algorithm}
}
\hide{
	\begin{algorithm}\label{algorithm:hlsUnfold}
		\small
		\SetKw{false}{false}
		\SetKw{true}{true}
		\SetKwProg{Fn}{Function}{}{end}
		\SetKwFunction{checkSortedEntl}{checkSortedEntl}
		\SetKwFunction{matchAtom}{matchAtom}
		\SetKwFunction{hlsUnfold}{hlsUnfold}
		\caption{Unfold $\hls{}(t_1, t_2; t_3)$ and consider different situations of the relative position of $\atomtail(b_1)$ between $t_1$ and $t_2$, where $b_1$ is a block or predicate atom.}
		\Fn{\hlsUnfold{$\Pi$, $a_1 \sepc \cdots \sepc a_{k-1}$, $\hls{}(t'_1, t'_2; t'_3)$, $a_{k+1} \sepc \cdots \sepc a_m$, $\Pi': b_1 \sepc \cdots \sepc b_n$}}{
						\tcc{unfold once}
						$res:=\true$\;
						$\varphi'_1 := \Pi \wedge c \le t'_2 - t'_1 \le t'_3: a_1 \sepc \cdots \sepc a_{k-1} \sepc t'_1 \pto t'_2 - t'_1 \sepc \blk(t'_1+1, t'_2) \sepc a_{k+1} \sepc \cdots \sepc a_m$\;
						\If{$\abs(\varphi'_1) \wedge \atomtail(b_1) = t'_1+1$ is satisfiable}
						{
						 	$res: =$ \matchAtom{$\abs(\varphi'_1)  \wedge \atomtail(b_1) = t'_1+1: a_1 \sepc \cdots \sepc a_{k-1} \sepc t'_1 \pto t'_2 - t'_1$, $\Pi': b_1$} $\wedge$ \checkSortedEntl{$\abs(\varphi'_1) \wedge \atomtail(b_1) = t'_1+1:  \blk(t'_1+1, t'_2) \sepc a_{k+1} \sepc \cdots \sepc a_m$, $\Pi': b_2 \sepc \cdots \sepc b_n$}\;
						}
						\If{$\abs(\varphi'_1) \wedge t'_1+1 < \atomtail(b_1) < t'_2$ is satisfiable}
						{
						 	$res: =$ \matchAtom{$\abs(\varphi'_1)  \wedge t'_1 +1<  \atomtail(b_1) < t'_2: a_1 \sepc \cdots \sepc a_{k-1} \sepc t'_1 \pto t'_2 - t'_1\sepc \blk(t'_1+1, \atomtail(b_1))$, $\Pi': b_1$} $\wedge$ \checkSortedEntl{$\abs(\varphi'_1) \wedge \wedge t'_1 +1<  \atomtail(b_1) < t'_2:  \blk(\atomtail(b_1), t'_2) \sepc a_{k+1} \sepc \cdots \sepc a_m, \Pi': b_2 \sepc \cdots b_n$}\;							
						}
						\tcc{unfold at least twice and $\atomtail(b_1)$ belongs to the first unfolding.}
						$\varphi'_2 := \Pi \wedge c \le x' - t'_1 \le t'_3: a_1 \sepc \cdots \sepc a_{k-1} \sepc t'_1 \pto x' - t'_1 \sepc \blk(t'_1+1, x') \sepc \hls{}(x', t'_2; t'_3) \sepc a_{k+1} \sepc \cdots \sepc a_m$\;
						\If{$\abs(\varphi'_2) \wedge \atomtail(b_1) = t'_1+1$ is satisfiable}
						{
						 	$res: =$ \matchAtom{$\abs(\varphi'_2)  \wedge \atomtail(b_1) = t'_1+1: a_1 \sepc \cdots \sepc a_{k-1} \sepc t'_1 \pto x' - t'_1$, $\Pi': b_1$} $\wedge$ \checkSortedEntl{$\abs(\varphi'_2) \wedge \atomtail(b_1) = t'_1+1:  \blk(t'_1+1, x') \sepc  \hls{}(x', t'_2; t'_3) \sepc a_{k+1} \sepc \cdots \sepc a_m, \Pi': b_2 \sepc \cdots \sepc b_n$}\;	
						}
						\If{$\abs(\varphi'_2) \wedge t'_1+1 < \atomtail(b_1) < x'$ is satisfiable}
						{
						 	$res: =$ \matchAtom{$\abs(\varphi'_2)  \wedge t'_1 +1<  \atomtail(b_1) < x': a_1 \sepc \cdots \sepc a_{k-1} \sepc t'_1 \pto x' - t'_1\sepc \blk(t'_1+1, \atomtail(b_1))$, $\Pi': b_1$} $\wedge$ \checkSortedEntl{$\abs(\varphi'_2) \wedge t'_1 +1 <  \atomtail(b_1) < x':  \blk(\atomtail(b_1), x') \sepc \hls{}(x', t'_2; t'_3) \sepc a_{k+1} \sepc \cdots \sepc a_m, \Pi': b_2 \sepc \cdots b_n$}\;							
						}
						\tcc{unfold at least twice and $\atomtail(b_1)$ belongs to the last unfolding.}
						$\varphi'_3 := \Pi \wedge c \le t'_2 - x' \le t'_3: a_1 \sepc \cdots \sepc a_{k-1} \sepc \hls{}(t'_1, x'; t'_3)  \sepc x' \pto t'_2 - x' \sepc  \blk(x'+1, t'_2) \sepc a_{k+1} \sepc \cdots \sepc a_m$\;
						$\cdots$\\
						\tcc{unfold at least three times and $\atomtail(b_1)$ belongs to some non-first-non-last unfolding.}
						$\varphi'_4 := \Pi \wedge c \le x'' - x' \le t'_3: a_1 \sepc \cdots \sepc a_{k-1} \sepc \hls{}(t'_1, x'; t'_3)  \sepc x' \pto x'' - x' \sepc  \blk(x'+1, x'') \sepc \hls{}(x'', t'_2; t'_3) \sepc a_{k+1} \sepc \cdots \sepc a_m$\;
						$\cdots$\\
		
		}
\end{algorithm}
}



\section{Implementation and experiments}
\label{sec:exp-hls}

\mypar{Implementation.}
The decision procedures presented are implemented as an extension of the \cspen\ solver~\cite{GuCW16}, called \cspenp, available at~\cite{CompSpenSite}.
Let us briefly recall some information about {\cspen}. {\cspen} is written in C++ and includes several decision procedures for symbolic heap fragments including 
(i) inductive predicates that are compositional~\cite{EneaSW15} 
	(the predicate $\ls{}(x,y)$ for list segments is a simple example)
and (ii) integer data constraints.
It uses SMT solvers (e.g., \textsc{Z3}) 
for solving linear integer arithmetic constraints.  
\cspen\ ranked third among the eleven solvers in the general podium 
of the last edition of SL-COMP, the competition of separation-logic solvers~\cite{SLCOMPsite}.

\cspenp\ 
supports the new theory \slah. 
Internally, \cspenp\ parses the input file 
which shall include the definition of $\hls{}$ 
and the satisfiability or entailment queries in the SL-COMP format~\cite{SLCOMPsite}.
%
\begin{compactitem}
\item For satisfiability queries, 
	it constructs the \EPbA\ abstraction of the \slah\ formulas 
    as shown in Section~\ref{sec:sat-hls}, and queries an SMT solver 
    on its satisfiability.
%
\item For entailment queries $\varphi \models \psi$, 
	\cspenp\ has to enumerate all the total preorders over the set of
	the start and end addresses of spatial atoms in $\varphi$ and $\psi$, 
	as described in Section~\ref{sec:ent}. This is time-consuming and 
	a bottleneck for the performance. 
	We introduced some heuristics based on the preprocessing of the formula 
	to extract preorders between addresses or 
	to decompose the entailment on simpler ones (i.e., with less atoms).
	For instance, we check for every block atom $b$ in $\psi$, 
	whether there is a collection of spatial atoms, 
	say $a_i \sepc a_{i+1} \sepc \cdots \sepc a_j$ with $i < j$, 
	such that they are contiguous 
	(i.e., the ending address of $a_k$ is the starting address of $a_{k+1}$),
	$\atomhead(b) = \atomhead(a_i)$ and $\atomtail(b) = \atomtail(a_j)$. 
	If this is the case, then we generate the entailment query $\abs(\varphi): a_i \sepc a_{i+1} \sepc \cdots \sepc a_j \models b$ and remove all these spatial atoms from $\varphi$ and $\psi$, thus reducing the original entailment query to a smaller one, for which the number of addresses for the total preorder enumeration is decreased.

%
	
\end{compactitem}

\smallskip
\mypar{Benchmarks.}
We generated 190 benchmarks, 
available at~\cite{benchmark},
classified into four suites, whose sizes are given in Table~\ref{tab-exp}, as follows: 
\begin{itemize}
\item MEM-SAT and MEM-ENT are satisfiability resp. entailment problems
	generated by the verification of programs that are building blocks of heap-list based memory allocators, including:
	create a heap-list with one element, 
	split a memory chunk into two consecutive memory chunks, 
	join two memory chunks, 
	search a memory chunk of size bigger than a given parameter (our running example), 
	or search an address inside a heap-list.
\item RANDOM-SAT and RANDOM-ENT are satisfiability resp. entailment problems
	which are randomly or manually generated.
	Starting from the path and verification conditions for programs manipulating heap lists,
	we replace some $\hls{}$ atoms with their unfoldings in order to generate
	formulas with more spatial atoms. 
	RANDOM-SAT includes also formulas where the atoms and their start
	and end address terms are generated randomly. In addition, we generate some benchmarks manually. 
	This suite is motivated by testing the scalability of \cspenp.
\end{itemize}
%
%

\smallskip
\mypar{Experiments.}
We run \cspenp over the four benchmark suites, using a Ubuntu-16.04 64-bit lap-top with an Intel Core i5-8250U CPU and 2GB RAM.
The experimental results are summarized in Table~\ref{tab-exp}. We set the timeout to 60 seconds. The statistics of average time and maximum time do not include the time of timeout instances.
To the best of our knowledge, there have not been solvers that are capable of solving the \slah\ formulas that include, points-to, block, and $\hls{}$ atoms. 
	The solver SLar~\cite{KimuraT17} was designed to solve entailment problems involving points-to, block, and $\ls{}$ atoms. Nevertheless, we are unable to find a way to access SLar, thus failing to compare with it on ASL formulas. 
	Moreover, the examples used by SLar, available online, are in a format that seems nontrivial to translate into the SL-COMP format.

\begin{table}[htbp]
		\vspace{-2eX}
		\caption{Experimental results, time measured in seconds}
		\label{tab-exp}
		\vspace{-2eX}
\begin{center}
		\setlength{\tabcolsep}{1eX}
		\renewcommand{\arraystretch}{1.2}
		\begin{tabular}{| c | c | c | c | c | c |}
			\hline
			Benchmark suite & $\#$instances & Timeout & Avg. time & Min. time & Max. time \\ 
			\hline
			\hline
			MEM-SAT & 38 & 0 & 0.05 & 0.03 &0.12 \\
			\hline
			RANDOM-SAT & 50 & 0 & 0.09 & 0.02 & 0.53 \\
			 \hline
			 TOTAL & 88 & 0 & 0.07 & 0.02 & 0.53 \\
			 \hline
			 \hline
			MEM-ENT  & 43 & 0 & 3.05  & 0.34  & 9.98 \\ 
			\hline
			 RANDOM-ENT & 59 & 2 & 13.39 & 0.04 & 48.85 \\
			 \hline
			 TOTAL & 102 & 2 & 8.94 & 0.04 & 48.85 \\
			 \hline
		\end{tabular}
\end{center}
		\vspace{-6eX}
	\end{table}

As expected, solving entailment instances is more expensive than solving satisfiability instances.
We recall from Section~\ref{sec:ent} that the procedure for entailment queries satisfiability of several formulas.
\cspenp\ efficiently solves the benchmark instances originated from program's verification, 
namely MEM-SAT and MEM-ENT, with the average time in 0.05 and 3.05 seconds respectively. 


Table~\ref{tab-exp} shows that some entailment instances 
are challenging for \cspenp. 
For instance, the maximum time in MEM-ENT suite is 
9.98, and there are 2 timeout instances in RANDOM-ENT suite (more than 2 min). 
By inspecting these challenging instances, we found that
(i) they require splitting some spatial atoms ($\blk{}$ or  $\hls{}$) in the antecedent, 
	which is potentially time-consuming, and
(ii) they correspond to valid entailment problems 
	where \cspenp\ has to explore all the total preorders,
	which is time-consuming. 
We noticed that when the entailment problem is invalid, 
the heuristics implemented are able to quickly 
find some total preorder under which the entailment does not hold.



\section{Conclusion}
\label{sec:conc-hls}

In this work, we investigated \slah, a separation logic fragment that allows pointer arithmetic inside inductive definitions so that the commonly used data structures e.g. heap lists can be defined.
We show that the satisfiability problem of {\slah} is NP-complete and 
    the entailment problem is coNP-complete.
We implemented the decision procedures in a solver, \cspenp, and use it 
   to efficiently solve more than hundred problems issued from verification of 
   program manipulating heap lists
   or randomly generated problems. 
For future work, it is interesting to see whether the logic \slah\ and its decision procedures can be extended to specify free lists, another common data structure in memory allocators \cite{Knuth97} in addition to heap lists. Moreover, since bit operations are also widely used in memory allocators, it would also be interesting to automate the reasoning about bit operations inside inductive definitions.

\bibliographystyle{splncs04}
\bibliography{biblio}

\begin{thebibliography}{10}
\providecommand{\url}[1]{\texttt{#1}}
\providecommand{\urlprefix}{URL }
\providecommand{\doi}[1]{https://doi.org/#1}

\bibitem{CompSpenSite}
The {CompSpen} solver, \url{https://github.com/suwy123/compspen2}

\bibitem{SLCOMPsite}
{SL-COMP} website, \url{https://sl-comp.github.io/}

\bibitem{benchmark}
benchmarks, C.: {CompSpen+} benchmarks (2021),
  \url{https://github.com/suwy123/compspen2/tree/master/samples/PAsamples}

\bibitem{BerdineCO04}
Berdine, J., Calcagno, C., O'Hearn, P.W.: A decidable fragment of separation
  logic. In: FSTTCS. LNCS, vol.~3328. Springer (2005)

\bibitem{DBLP:conf/csl/BrotherstonFPG14}
Brotherston, J., Fuhs, C., P{\'{e}}rez, J.A., Gorogiannis, N.: A decision
  procedure for satisfiability in separation logic with inductive predicates.
  In: {CSL-LICS}. {ACM} (2014)

\bibitem{BrotherstonGK17}
Brotherston, J., Gorogiannis, N., Kanovich, M.: Biabduction (and related
  problems) in array separation logic. In: {CADE} 26. LNCS, vol. 10395.
  Springer (2017)

\bibitem{CD11}
Calcagno, C., Distefano, D.: Infer: An automatic program verifier for memory
  safety of {C} programs. In: {NFM}. LNCS, vol.~6617. Springer (2011)

\bibitem{CDD+15}
Calcagno, C., Distefano, D., Dubreil, J., Gabi, D., Hooimeijer, P., Luca, M.,
  O'Hearn, P.W., Papakonstantinou, I., Purbrick, J., Rodriguez, D.: Moving fast
  with software verification. In: {NFM}. LNCS, vol.~9058. Springer (2015)

\bibitem{CalcagnoDOHY06}
Calcagno, C., Distefano, D., O'Hearn, P.W., Yang, H.: Beyond {Reachability}:
  {Shape} {Abstraction} in the {Presence} of {Pointer} {Arithmetic}. In: {SAS}.
  LNCS, vol.~4134. Springer (2006)

\bibitem{Chlipala11}
Chlipala, A.: Mostly-automated verification of low-level programs in
  computational separation logic. In: {PLDI}. {ACM} (2011)

\bibitem{CookHOPW11}
Cook, B., Haase, C., Ouaknine, J., Parkinson, M.J., Worrell, J.: Tractable
  reasoning in a fragment of separation logic. In: CONCUR. LNCS, vol.~6901.
  Springer (2011)

\bibitem{DBLP:conf/lpar/EchenimIP20}
Echenim, M., Iosif, R., Peltier, N.: Entailment checking in separation logic
  with inductive definitions is 2-exptime hard. In: {LPAR}. EPiCSC, EasyChair
  (2020)

\bibitem{EneaSW15}
Enea, C., Sighireanu, M., Wu, Z.: On automated lemma generation for separation
  logic with inductive definitions. In: ATVA. LNCS, vol.~9364. Springer (2015)

\bibitem{GuCW16}
Gu, X., Chen, T., Wu, Z.: A complete decision procedure for linearly
  compositional separation logic with data constraints. In: {IJCAR}. LNAI,
  vol.~9706. Springer (2016)

\bibitem{Haase2018ASG}
Haase, C.: A survival guide to {P}resburger arithmetic. ACM SIGLOG News
  \textbf{5},  67--82 (2018), \url{https://dl.acm.org/citation.cfm?id=3242964}

\bibitem{KatelaanMZ19}
Katelaan, J., Matheja, C., Zuleger, F.: Effective entailment checking for
  separation logic with inductive definitions. In: {TACAS}. LNCS, vol. 11428.
  Springer (2019)

\bibitem{KimuraT17}
Kimura, D., Tatsuta, M.: Decision procedure for entailment of symbolic heaps
  with arrays. In: {APLAS}. LNCS, vol. 10695. Springer (2017)

\bibitem{DBLP:journals/corr/abs-1802-05935}
Kimura, D., Tatsuta, M.: Decidability for entailments of symbolic heaps with
  arrays. CoRR  \textbf{arXiv:1802.05935v3} (2018),
  \url{http://arxiv.org/abs/1802.05935}

\bibitem{Knuth97}
Knuth, D.E.: The Art of Computer Programming, Volume 1 (3rd Ed.): Fundamental
  Algorithms. Addison Wesley Longman Publishing Co., Inc., USA (1997)

\bibitem{DBLP:conf/vmcai/Le21}
Le, Q.L.: Compositional satisfiability solving in separation logic. In:
  {VMCAI}. LNCS, vol. 12597. Springer (2021)

\bibitem{MartiAY06}
Marti, N., Affeldt, R., Yonezawa, A.: Formal verification of the heap manager
  of an operating system using separation logic. In: {ICFEM}. LNCS, Springer
  (2006)

\bibitem{OHearn19}
O'Hearn, P.: Separation logic. Commun. ACM  \textbf{62}(2) (2019)

\bibitem{Reynolds:2002}
Reynolds, J.: Separation logic: A logic for shared mutable data structures. In:
  LICS. IEEE Computer Society (2002)

\bibitem{WJ+95}
Wilson, P.R., Johnstone, M.S., Neely, M., Boles, D.: Dynamic storage
  allocation: A survey and critical review. In: {IWMM}. Springer (1995)

\end{thebibliography}

\newpage
\appendix


\section{{\qfpa} abstraction for \slah\ formulas}\label{app:sat-hls}

\subsection{{\qfpa} summary of $\hls{}$ atoms}
\label{ssec:sat-hls-abs}

\noindent {\bf Lemma~\ref{lem-hls}}.
\emph{Let $ \hls{}(x, y; z)$ be an atom in \slah\
representing a non-empty heap, where $x, y, z$ are three distinct variables in $\cV$.
Then there is an {\qfpa} formula, denoted by $\abs^+(\hls{}(x,y; z))$, which summarizes $\hls{}(x, y; z)$, namely for every stack $s$ $s \models \abs^+(\hls{}(x,y; z))$ iff there exists a heap $h$ such that $s, h \models \hls{}(x, y, z)$. }

\begin{proof}
The constraint that the atom represents a non-empty heap means that 
the inductive rule defining $\hls{}$ in Equation~(\ref{eq:hlsv-rec})
should be applied at least once. Notice that the semantics of this rule
defines, at each inductive step,
a memory block starting at $x$ and ending before $x'$ of size $x'-x$.
By induction on $k \ge 1$, we obtain that $\hls{k}(x, y; z)$ defines
a memory block of length $y-x$ such that 
$2 k \le y-x \le k z$. Then $\hls{}(x, y; z)$ is summarized by the formula $\exists k.\ k \ge 1 \wedge 2k \le y -x \le kz$, which is a non-linear arithmetic formula.

The formula $\exists k.\ k \ge 1 \wedge 2k \le y -x \le kz$ is actually equivalent to the disjunction of the two formulas corresponding to the following two  cases:
\begin{itemize}
\item If $2 = z$,  then $\abs^+(\hls{}(x, y; z))$ has the formula $\exists k.\ k \ge 1 \land y -x = 2k$, which is equivalent to the {\qfpa} formula $2 \le y -x \wedge y - x \equiv_2 0$, as a disjunct.
\item If $2 < z$, then we consider the following sub-cases:  
\begin{itemize}
\item if $k = 1$ 
	then $\abs^+(\hls{}(x, y; z))$ contains the formula
		$ 2 \leq y-x \le z$ as a disjunct;
		
\item if $k \ge 2$, then we observe that the intervals 
	$[2k, k z]$ and $[2(k+1), (k+1) z]$
	overlap.
	Indeed, 
\begin{eqnarray*}
k z - 2(k+1) & = & 
k (z -2) - 2 
 \ge k - 2 \ge 0.
\end{eqnarray*}
    Therefore, $\bigcup \limits_{k \ge 2} [2k, kz] = [4, \infty)$. It follows that the formula 
    $\exists k.\ k \ge 2 \land 2k \le y-x \le k z$ 
    is equivalent to $4 \le y-x$. Therefore, $\abs^+(\hls{}(x, y; z))$ contains $4 \le y-x$ as a disjunct.
\end{itemize}
\end{itemize}
To sum up, we obtain
\begin{eqnarray*}
\abs^+(\hls{}(x, y; z)) & \triangleq 
          & \big(2 = z 
				 ~\land~ \exists k.\ k \ge 1 \land 2k = y-x \big) \\
   & \lor & \big(2 < z
            ~\land~
                   \big(2 \leq y-x \le z \lor 4 \le y-x 
                          \big) 
            \big),
\end{eqnarray*}
which can be further simplified into 
\begin{eqnarray*}
\abs^+(\hls{}(x, y; z)) & \triangleq 
          & \big(2 = z \land 2 \le y -x \wedge y - x \equiv_2 0 \big) \\
   & \lor & \big(2 < z 
            \land 2 \le y - x
                          \big).
\end{eqnarray*}
%
%
%
\qed
\end{proof}

\subsection{$\abs(\varphi)$: The {\qfpa} abstraction of $\varphi$}

We utilize $\abs^+(\hls{}(x, y; z))$ 
to obtain in polynomial time an equi-satisfiable {\qfpa} abstraction for a symbolic heap $\varphi$, denoted by $\abs(\varphi)$.

We introduce some notations first.
Given a formula $\varphi\equiv \Pi : \Sigma$, 
$\atoms(\varphi)$ denotes the set of spatial atoms in $\Sigma$, and $\patoms(\varphi)$ denotes the set of predicate atoms in $\Sigma$. 

\begin{definition}{(Presburger abstraction of \slah\ formula)}
Let $\varphi\equiv \Pi : \Sigma$ be a \slah\ formula.
The abstraction of $\varphi\equiv \Pi : \Sigma$, 
	denoted by $\abs(\varphi)$ is 
	the formula $\Pi\wedge\phi_{\Sigma}\wedge\phi_*$
	where:
\begin{itemize}
\item $\phi_{\Sigma}\triangleq\bigwedge\limits_{a \in \atoms(\varphi)}\abs(a)$ such that
{\small
\begin{eqnarray}
\abs(t_1\mapsto t_2) & \triangleq & \ltrue \\
\abs(\blk(t_1,t_2))  & \triangleq & t_1<t_2 \\
\abs(\hls{}(t_1, t_2, t_3))  & \triangleq & t_1=t_2 \\
& \lor &(t_1 < t_2 \land \abs^+ (\hls{}(x, y;z))[t_1/x, t_2/y, t_3/z]).
\end{eqnarray}
}

\item $\phi_\sepc\triangleq\phi_1\wedge \phi_2 \wedge \phi_3$ 
	specifies the semantics of separating conjunction, 
where 	
{\small
\begin{eqnarray}
\phi_1 & \triangleq & \bigwedge\limits_{a_i,a_j \in \patoms(\varphi), i<j} 
	\begin{array}[t]{l}
	(\isnonemp_{a_i} \land \isnonemp_{a_j}) \limp \\
	\quad (\atomtail(a_j) \le \atomhead(a_i)\lor \atomtail(a_i) \le \atomhead(a_j))
	\end{array}
\\
\phi_2 & \triangleq & \bigwedge\limits_{a_i \in \patoms, a_j \not \in \patoms} 
	\begin{array}[t]{l}
	(\isnonemp_{a_i}) \limp \\
	\quad (\atomtail(a_j) \le \atomhead(a_i) \lor \atomtail(a_i) \le \atomhead(a_j))
	\end{array}
\\
\phi_3 & \triangleq & \bigwedge\limits_{a_i, a_j \not \in \patoms(\varphi),  i < j} 
		\atomtail(a_j) \le \atomhead(a_i) \lor \atomtail(a_i) \le \atomhead(a_j)
\end{eqnarray}
}
\end{itemize}
and for each spatial atom $a_i$, $\isnonemp_{a_i}$ is an abbreviation of the formula $\atomhead(a_i) < \atomtail(a_i)$.
\end{definition}

For formulas $\varphi\equiv \exists \vec{z}\cdot \Pi : \Sigma$,
we define $\abs(\varphi) \triangleq \abs(\Pi:\Sigma)$ since $\exists \vec{z}\cdot \Pi : \Sigma$ and $\Pi:\Sigma$ are equi-satisfiable.

\vspace{4mm}

\noindent {\bf Proposition~\ref{prop-sat-correct}}.
\emph{A \slah\ formula $\varphi$ is satisfiable iff $\abs(\varphi)$ is satisfiable.}

\smallskip

\begin{proof}
\emph{``Only if'' direction}:
Suppose that $\varphi$ is satisfiable. 

Then there exists a stack $s$ and a heap $h$
such that $s,h \models \varphi$. 
We show $s \models \abs(\varphi)$.
The semantics of \slah\ implies that at least one disjunct in $\abs(a)$
is true for every predicate atom $a$. Therefore, $s \models \phi_\Sigma$. Moreover, $s \models \phi_\sepc$, as a result of the semantics of separating conjunction. Thus $s \models \abs(\varphi)$ and  $\abs(\varphi)$ is satisfiable.

\smallskip
\noindent \emph{``If'' direction}: Suppose that $\abs(\varphi)$ is satisfiable. 

Then there is an interpretation $s$ such that $s \models \abs(\varphi)$.
We build a heap $h$ from $s$ and $\varphi$ as follows: 
\begin{itemize}
\item For each points-to atom $t_1 \pto t_2$ in $\varphi$, $h(s(t_1)) = s(t_2)$.
\item For each block atom $\blk(t_1, t_2)$ in $\varphi$, $h(n) = 1$ for each $n \in [s(t_1), s(t_2)-1]$.
\item For every predicate atom $\hls{}(t_1,t_2; t_3)$ in $\varphi$, we have $s \models \abs(\hls{}(t_1,t_2; t_3))$.  Then either $s(t_1) = s(t_2)$ or $s(t_1) < s(t_2)$ and $s \models \abs^+(\hls{}(x, y; z))[t_1/x, t_2/y, t_3/z]$. 
\begin{itemize}
\item For the first case, we let $h(s(t_1))$ undefined. 

\item For the second case,  $s \models (2 = t_3 \wedge t_1 < t_2 \wedge t_2 - t_1 \equiv 0 \bmod 2) \vee (2 < t_3 \wedge 2 \le t_2 - t_1)$. 
\begin{itemize}
\item If $s(t_3) = 2$, then let $h(s(t_1) + 2(i-1)) = 2$ and $h(s(t_1) + 2i-1) = 1$ for every $i: 1 \le i \le \frac{s(t_2) - s(t_1)}{2}$.
\item If $s(t_3) > 2$, then from $s(t_2) - s(t_1) \ge 2$, we know that there is a sequence of numbers $n_1, \cdots, n_\ell$ such that $2 \le n_i \le s(t_3)$ for every $i \in [\ell]$ and $s(t_2) = s(t_1) + \sum \limits_{i \in [\ell]} n_i$.  We then let $h(s(t_1) + \sum \limits_{j \in [i-1]} n_j) = n_i$ for every $i \in [\ell]$, and let $h(n') = 1$ for all the other addresses $n' \in [s(t_1), s(t_2)-1]$. 
\end{itemize}
From the construction above, we know that the subheap of the domain $[s(t_1), s(t_2)-1]$ satisfies $\hls{}(t_1, t_2; t_3)$.
\end{itemize}
\end{itemize}
From the definition of $h$, we know that $s, h \models \varphi$. Therefore, $\varphi$ is satisfiable.
\qed
\end{proof}


\begin{remark}
From the definition of $\abs(\varphi)$, it follows that 
$\abs(\varphi)$ is in {\qfpa} and 
the size of $\abs(\varphi)$ is polynomial in that of $\varphi$. 
From the fact that the satisfiability of {\qfpa} is in NP, we conclude that the satisfiability of \slah\ is in NP.
%
\end{remark}

\section{Proof of Lemma~\ref{lem-eub}}

\smallskip

\noindent {\bf Lemma~\ref{lem-eub}}.
\emph{For an {\slah} formula $\varphi \equiv \Pi: \hls{}(t'_1, t'_2; t'_3)$, a {\qfpa} formula $\xi_{eub,\varphi}(z)$ can be constructed in linear time such that for every store $s$ satisfying $s \models \abs(\varphi)$, we have $s[z \gets \eub_\varphi(s)] \models \xi_{eub,\varphi}(z)$ and $s[z \gets n] \not \models \xi_{eub,\varphi}(z)$ for all $n \neq \eub_\varphi(s)$. 
}

\smallskip

\begin{proof}
From the construction of $\abs^+(\hls{}(t'_1, t'_2; t'_3))$ in Section~\ref{sec:sat-hls}, we know that for every store $s$, there is a heap $h$ such that $s, h \models \Pi: \hls{}(t'_1, t'_2; t'_3)$ iff $s \models \Pi \wedge \big((t'_3 = 2 \wedge 2 \le t'_2 - t'_1 \wedge t'_2 - t'_1 \equiv_2 0) \vee ( 2 < t'_3 \wedge 2 \le t'_2 - t'_1)\big)$. Then the fact that $z$ appears in some unfolding scheme of $\varphi$ w.r.t. some store $s$ 
 can be specified by the formula 
\[
\Pi \wedge 
\left(
\begin{array}{l}
(t'_3 = 2 \wedge z = 2 \wedge 2 \le t'_2 - t'_1 \wedge t'_2 - t'_1 \equiv_2 0)\ \vee \\
(2 < t'_3 \wedge 2 \le z \le t'_3\ \wedge (2 \le t'_2 - t'_1 - z \vee t'_2 - t'_1 = z))
\end{array}
\right),
\]
where $t'_2 - t'_1 - z$ denotes the remaining size of the heap after removing a memory chunk of size $z$. Therefore, we define $\xi_{eub,\varphi}(z)$ as 
\[
\begin{array}{l}
 \Pi \wedge
\left(
\begin{array}{l}
(t'_3 = 2 \wedge z = 2 \wedge 2 \le t'_2 - t'_1 \wedge t'_2 - t'_1 \equiv_2 0)\ \vee  \\
 \left(
 \begin{array}{l}
 2 < t'_3 \wedge  2 \le z \le t'_3 \wedge (2 \le t'_2 - t'_1 - z \vee t'_2 - t'_1 = z) \ \wedge  \\
 \forall z'.\ z <  z' \le  t'_3 \rightarrow \neg (2 \le t'_2 - t'_1 - z' \vee t'_2 - t'_1 = z') 
\end{array}
 \right)
\end{array}
 \right),
\end{array}
\]
where $ \forall z'.\ z <  z' \le  t'_3 \rightarrow \neg  (2 \le t'_2 - t'_1 - z' \vee t'_2 - t'_1 = z') $ asserts that $z$ is the maximum chunk size upper bound among unfolding schemes of $\varphi$. The formula $ \forall z'.\ z <  z' \le  t'_3 \rightarrow \neg  (2 \le t'_2 - t'_1 - z' \vee t'_2 - t'_1 = z') $ is equivalent to $ \forall z'.\ z <  z' \le  t'_3 \rightarrow   (t'_2 - t'_1 - z' \le 1 \wedge t'_2 - t'_1 \neq z')$, and can be simplified into  
$$
\begin{array}{l}
z+1 \le t'_3 \rightarrow  (t'_2 - t'_1 \le z+2 \wedge t'_2 - t'_1  \neq z+1)\ \wedge \\
z+2 \le t'_3 \rightarrow t'_2 - t'_1  \neq z+2,
\end{array}
$$
where $t'_2 - t'_1  \neq z+1$ is an abbreviation of $t'_2 - t'_1 < z+1 \vee z+1 < t'_2 - t'_1$, similarly for $t'_2 - t'_1  \neq z+2$.

It follows that $\xi_{eub,\varphi}(z) $ can be simplified into 
\[
 \Pi  \wedge 
 \left(
\begin{array}{l}
(t'_3 = 2 \wedge z = 2 \wedge 2 \le t'_2 - t'_1 \wedge t'_2 - t'_1 \equiv_2 0)\ \vee  \\
 \left(
\begin{array}{l}
 2 < t'_3 \wedge  2 \le z \le t'_3 \wedge (2 \le t'_2 - t'_1 - z \vee t'_2 - t'_1 = z) \ \wedge  \\
z+1 \le t'_3 \rightarrow  (t'_2 - t'_1 \le z+2 \wedge t'_2 - t'_1  \neq z+1)\ \wedge \\
z+2 \le t'_3 \rightarrow t'_2 - t'_1  \neq z+2
\end{array}
\right)
\end{array}
 \right),
\]
which is a {\qfpa} formula. 
\qed
\end{proof}

\section{Proof of Lemma~\ref{lem-hls-hls}}

\noindent {\bf Lemma~\ref{lem-hls-hls}}
\emph{
Let $\varphi \equiv \Pi: \hls{}(t'_1, t'_2; t'_3)$, $\psi \equiv \hls{}(t_1, t_2; t_3)$, and $\preceq$ be a total preorder over $\addr(\varphi) \cup \addr(\psi)$ such that $C_\preceq \models t'_1 < t'_2 \wedge t'_1 = t_1 \wedge t'_2 = t_2$. 
Then $\varphi \models_\preceq \psi$ iff $C_\preceq \wedge \abs(\varphi)  \models  
\forall z.\ \xi_{eub, \varphi}(z) \rightarrow z \le t_3$.  
}
\smallskip

\begin{proof}
\noindent \emph{``Only if'' direction}: Suppose $\varphi  \models_\preceq \psi$.  
%

At first, we observe that $C_\preceq \wedge \abs(\varphi)$ and $\varphi$ contain the same number of variables. 

Let $s$ be an interpretation such that $s \models C_\preceq \wedge \abs(\varphi)$. From $C_\preceq \models t'_1 < t'_2 \wedge t'_1 = t_1 \wedge t'_2 = t_2$, we have $s(t'_1) < s(t'_2)$, $s(t'_1) = s(t_1)$, and $s(t'_2) = s(t_2)$. 

At first, from $s \models \abs(\varphi)$ and Lemma~\ref{lem-eub}, we deduce that $s[z \rightarrow \eub_\varphi(s)] \models \xi_{eub,\varphi}(z)$.
Moreover, from the definition of $\eub_\varphi(s)$, we know that there is an unfolding scheme, say $sz_1, \cdots, sz_\ell$, such that $\eub_\varphi(s) = \max(sz_1, \cdots, sz_\ell)$. From the definition of unfolding schemes, we have $2 \le sz_i \le s(t'_3)$ for every $i \in [\ell]$ and $s(t'_2) = s(t'_1) + \sum \limits_{i \in [\ell]} sz_i$. Therefore, there is a heap $h$ such that $s, h \models \varphi$ and for every $i \in [\ell]$, $h(s(t'_1) + \sum \limits_{j \in [i-1]} sz_j) = sz_i$. From the assumption $\varphi  \models_\preceq \psi$, we deduce that $s, h \models \psi \equiv \hls{}(t_1, t_2; t_3)$. Then each chunk of $h$ has to be matched exactly to one unfolding of $ \hls{}(t_1, t_2; t_3)$. Thus, $2 \le sz_i \le s(t_3)$ for every $i \in [\ell]$. Therefore, $\eub_\varphi(s) =  \max(sz_1, \cdots, sz_\ell) \le s(t_3)$. We deduce that $s[z \gets \eub_\varphi(s)] \models \xi_{eub,\varphi}(z) \wedge z \le t_3$. Moreover, from Lemma~\ref{lem-eub}, we know that for all $n \neq \eub_\varphi(s)$, $s[z \gets n] \not \models \xi_{eub,\varphi}(z)$. Consequently, $s \models \forall z.\  \xi_{eub,\varphi}(z) \rightarrow z \le t_3$.
 
\medskip
\noindent \emph{``If'' direction}: Suppose $C_\preceq \wedge \abs(\varphi)  \models  \forall z.\ \xi_{eub, \varphi}(z) \rightarrow z \le t_3$ and $s, h \models C_\preceq \wedge \Pi: \hls{}(t'_1, t'_2; t'_3)$. We show $s, h \models \psi$.

From $s, h \models C_\preceq \wedge \Pi: \hls{}(t'_1, t'_2; t'_3)$, we know that there are $sz_1, \cdots, sz_\ell$ such that they are the sequence of chunk sizes in $h$. Therefore, $(sz_1, \cdots, sz_\ell)$ is an unfolding scheme of $\varphi$ w.r.t. $s$.  From the definition of $\eub_\varphi(s)$, we have $\eub_\varphi(s) $ is the maximum chunk size upper bound of the unfolding schemes of $\varphi$ w.r.t. $s$. Therefore, $\max(sz_1, \cdots, sz_\ell) \le \eub_\varphi(s)$.

From $s, h \models C_\preceq \wedge \Pi: \hls{}(t'_1, t'_2; t'_3)$, we deduce that $s \models C_\preceq \wedge \abs(\varphi)$. 
Because $C_\preceq \wedge \abs(\varphi)  \models  \forall z.\ \xi_{eub, \varphi}(z) \rightarrow z \le t_3$, we have $s \models \forall z.\ \xi_{eub, \varphi}(z) \rightarrow z \le t_3$. 
From Lemma~\ref{lem-eub}, we know that $s[z \gets \eub_\varphi(s)] \models \xi_{eub, \varphi}(z)$. 
Therefore, we deduce that $s[z \gets \eub_\varphi(s)] \models z \le t_3$, namely, $\eub_\varphi(s) \le s(t_3)$. Then $\max(sz_1, \cdots, sz_\ell) \le \eub_\varphi(s) \le s(t_3)$. It follows that for every $i \in [\ell]$, $2 \le sz_i \le s(t_3)$. We conclude that $s, h \models \hls{}(t_1, t_2; t_3) \equiv \psi$.
\qed
\end{proof}


\section{Complexity of the entailment problem}

Theorem~\ref{thm-entail} states that the upper bound of the entailment problem is \textsc{EXPTIME}. 
The lower bound is given by the complexity of the entailment problem for quantifier free formulas in \textsc{ASL}, which is shown to be \textsc{coNP} in~\cite{BrotherstonGK17}.

\textbf{Claim:} The entailment problem in \slah\ is in \textbf{coNP}.

\begin{proof}

Let fix $\varphi \models \psi$ an entailment problem in \slah.

We build in polynomial time a \PbA\ formula $\chi$ from $\varphi$ and $\psi$ 
such that 
for any stack $s$, $s\models \chi$ 
			   iff $\exists h.~s,h\models \varphi$ and $s,h \not\models \psi$.

We suppose that both formula $\varphi$ and $\psi$ have been transformed in normal form as follows:
\begin{itemize}
\item atoms $\hls{}(x,y;v')$ with $\abs(\varphi)\models (v'<2 \lor x=y)$ have been removed and replaced by $x=y$\mihaela{use $\xi$ ?}
\item atoms $\hls{}(x,y;v')$ with $\abs(\varphi)\models y-x=2 \land v' \ge 2$ have been replaced by $\exists z.~ z \le v' \land y-x=2 : ~x\pto z \star \blk(x+1,y)$
\item ...\mihaela{TODO}
\end{itemize}
The normalization can be done in a linear number of calls to the \EPbA\ procedure.

The formula $\chi$ shall ensure that $\varphi$ is satisfiable, so
it contains as conjunct $\abs(\varphi)$.

The entailment may be invalid iff one of the following holds:
\begin{enumerate}
\item $\psi$ is not satisfiable, i.e., $\abs(\psi)$ is not satisfiable.

\item there exists an address inside a spatial atom of $\varphi$ 
	which is not inside a spatial atom of $\psi$; 
	we denote this formula by $cov(\varphi,\psi)$;

\item similarly, $cov(\psi,\varphi)$ (the semantics is non intuitionistic),

\item a points-to or a heap-list atom of $\psi$ ``covers'' a domain of addresses
	which is inside the one of a block atom of $\varphi$;
	we denote this formula by $nblk(\varphi,\psi)$;
	
\item a points-to atom of $\psi$ defines an address inside (strictly) 
	a heap-list atom of $\varphi$;
	we denote this formula by $npto(\varphi,\psi)$;
	
\item a points-to atom $x\pto v$ of $\psi$ defines the same address as the start address
	of a heap-list atom $\hls{}(x,y;v')$ in $\varphi$;
	we denote this formula by $npto'(\varphi,\psi)$;
	(this is true because we are in the normal form);
	
\item a points-to atoms $x\pto v$ of $\psi$ defines the same address as a 
	points-to atom $x\pto v'$ of $\varphi$ but $v\neq v'$;
		
\item a heap-list atom $\hls{}(x,y;v')$ of $\psi$ starts at an address at which is defined a points-to atom $x\pto v$ of $\varphi$ such that $v> v'$;
			
\item a heap-list atom $\hls{}(x,y;v')$ of $\psi$ ``covers'' the domain of
	addresses if an atom $\hls{}(z,w;v)$ of $\varphi$ but $v' < v$;

\item a heap-list atom $\hls{}(x,y;v')$ of $\psi$ ``covers'' a domain of addresses
	that can not be filded into a $\hls{}$ atom with chunks of maximal size less or equal to $v'$\mihaela{difficult in P!}

\end{enumerate}

\end{proof}

\end{document}